\documentclass[a4paper, 11pt]{article}

\voffset=-2.0cm
\hoffset=-1.9cm
\setlength{\textwidth}{16.8cm}
\setlength{\textheight}{22.8cm}

\usepackage{enumerate}
\usepackage{bm}
\usepackage{microtype}
\usepackage{amsfonts}
\usepackage{amsthm,amsmath,amsfonts, amssymb}
\usepackage[dvips]{graphics,color}
\usepackage{tikz}
\usepackage{graphicx}
\usepackage{caption}

\bibliographystyle{plainurl}

\newtheorem{theorem}{Theorem}
\newtheorem{corollary}{Corollary}
\newtheorem{lemma}{Lemma}

\title{Tight Bounds for Online Coloring of Basic Graph Classes\thanks{Work supported by the European Research Council, Grant Agreement No.\ 691672, project APEG.}}
\author{Susanne Albers\thanks{Department of Computer Science, Technical University of Munich, Garching, Germany. \texttt{albers@in.tum.de}.} \and 
	Sebastian Schraink\thanks{Department of Computer Science, Technical University of Munich, Garching, Germany. \texttt{schraink@in.tum.de}.} }
\date{}

\begin{document}

\maketitle

\begin{abstract}
	We resolve a number of long-standing open problems in online graph coloring. More specifically, we develop tight
	lower bounds on the performance of online algorithms for fundamental graph classes. An important
	contribution is that our bounds also hold for randomized online algorithms, for which hardly any results 
	were known. Technically, we construct lower bounds for chordal graphs. The constructions then allow us to
	derive results on the performance of randomized online algorithms for the following further graph classes: 
	trees, planar, bipartite, inductive, bounded-treewidth and disk graphs. It shows that the best
	competitive ratio of both deterministic and randomized online algorithms is $\Theta(\log n)$, where $n$ is the number of vertices of a graph.
	Furthermore, we prove that this guarantee cannot be improved
	if an online algorithm has a lookahead of size $O(n/\log n)$ or access to a reordering buffer of size $n^{1-\epsilon}$,
	for any $0<\epsilon\leq 1$. A consequence of our results is that, for all of the above mentioned graph classes except bipartite graphs,
	the natural \emph{First Fit} coloring
	algorithm achieves an optimal performance, up to constant factors, among deterministic and randomized
	online algorithms.
\end{abstract}

\section{Introduction}
\emph{Online graph coloring} is a classical problem in graph theory and online computation. It has applications
in job scheduling, dynamic storage allocation and resource management in wireless networks \cite{K1,M1,N1}. 
A problem instance is defined by an undirected graph $G=(V,E)$, consisting of a vertex set $V$ and an edge set $E$. 
Let $|V|=n$. The vertices arrive one by one in a sequence $\sigma = v_1,\ldots, v_n$ that may be
determined by an adversary. Whenever a new vertex $v_t$ arrives, $1\leq t \leq n$, its edges to previous vertices 
$v_s$ with $s<t$ are revealed. An online algorithm $\mathcal{A}$ has to immediately assign a feasible color to $v_t$, i.e.\ a color
that is different from those assigned to the neighbors of $v_t$ presented so far. The goal is to minimize the total 
number of colors used. 

For a graph $G$, let $\mathcal{A}(G)$ be the number of colors used by $\mathcal{A}$. Let $\chi(G)$ be the chromatic number of $G$, which is the minimum
number of colors needed to color $G$ offline. An online algorithm $\mathcal{A}$ is \emph{$c$-competitive} if $\mathcal{A}(G) \leq c \cdot \chi(G)$
holds for every graph $G$~\cite{ST}. If $\mathcal{A}$ is a randomized algorithm, then $E[\mathcal{A}(G)]$ is the expected
number of colors used by $\mathcal{A}$. The algorithm is $c$-competitive against oblivious adversaries if 
$E[\mathcal{A}(G)] \leq c \cdot \chi(G)$ holds for every $G$~\cite{BB+}. An oblivious adversary, when determining $\sigma$,
does not know the outcome of the random choices made by $\mathcal{A}$. We always evaluate randomized online algorithms
against this type of adversary. When considering specific graph classes, for a deterministic or randomized algorithm,
the competitive factor of $c$ must hold for every graph from the given class. 

The framework defined above is the standard online one. It is also interesting to explore settings where an algorithm is
given more power. An online algorithm $\mathcal{A}$ has \emph{lookahead $l$} if, upon the arrival of vertex $v_t$, the algorithm 
also sees the next $l$ vertices $v_{t+1}, \ldots, v_{t+l}$ along with their adjacencies to vertices in 
$\{v_1,\ldots, v_{t+l}\}$. Alternatively, an algorithm might have a \emph{buffer of size $b$} in which vertices can
be stored temporarily. The requirement is that at the end of step~$t$ the algorithm must have colored at least
$t-b$ vertices. A buffer is more powerful than lookahead because it allows the algorithm to partially reorder the
input sequence and delay coloring decisions. 
The value of a buffer has recently been explored for a variety of online problems, see e.g.~\cite{AR,E+} and references therein. 

\textbf{Previous work:} 
For general graphs, the competitive ratios are high compared to the trivial upper bound of $n$. Lovasz, Saks and Trotter~\cite{LST}
developed a deterministic online algorithm that achieves a competitive factor of $O(n/\log^* n)$. Vishwanathan~\cite{V1} devised a randomized
algorithm that attains a competitiveness of $O(n/\sqrt{\log n})$. This bound was improved to 
$O(n/\log n)$ by Halldorsson~\cite{H1}. Halldorsson and Szegedy~\cite{HS} proved that the competitive ratio of any deterministic
online algorithm is $\Omega(n/\log^2n)$. This lower bound also holds for randomized algorithms.
Moreover, it holds if a randomized algorithm has a lookahead or a buffer of size~$O(\log^2 n)$~\cite{HS}.

There has also been considerable research interest in online coloring for various graph classes. An early and celebrated
result proved by Bean~\cite{B1} in 1976 is that, for trees, every deterministic online algorithm can be forced to use 
$\Omega(\log n)$ colors. The \emph{First Fit} algorithm colors every tree with $O(\log n)$ colors~\cite{GL}. The natural 
strategy \emph{First Fit} assigns the lowest-numbered feasible color to each incoming vertex. Since trees have a chromatic 
number of~2, the best competitive ratio achievable by deterministic online algorithms is $\Theta(\log n)$. 
For bipartite graphs, there also exists a deterministic online algorithm that uses $O(\log n)$ colors~\cite{LST}, implying
that the best competitiveness of deterministic strategies is again $\Theta(\log n)$. However, \emph{First Fit}
performs poorly, as there are bipartite graphs for which it requires $\Omega(n)$ colors.  Kierstead and Trotter~\cite{KT}
proved that, for interval graphs, the best competitive ratio of deterministic online algorithms is equal to~3.

A paper directly related to our work is by Irani~\cite{I1}. She examined $d$-inductive graphs, also referred to as
$d$-degenerate graphs. They are defined as the graphs which admit a numbering of the vertices such that each vertex is adjacent to at most $d$ higher-numbered vertices. 
Every planar graph is $5$-inductive and every chordal graph $G$ is ($\chi(G)-1$)-inductive. Irani~\cite{I1} proved that 
\emph{First Fit} colors every $d$-inductive graph with $O(d \cdot \log n)$ colors. Furthermore, for every deterministic online algorithm 
$\mathcal{A}$, there exist graphs such that $\mathcal{A}$ uses $\Omega(d \cdot \log n)$ colors~\cite{I1}. Since $d$-inductive graphs 
have a chromatic number of at most $d+1$, the best competitive ratio achieved by deterministic online algorithms 
is $\Omega(\log n)$. 
For planar graphs a tight bound of $\Theta(\log n)$ holds because trees are planar. However, it was an open problem if a tight
competitiveness of $\Theta(\log n)$ holds for general chordal graphs. In fact, Irani~\cite{I1} raised the question if, for every 
deterministic online algorithm $\mathcal{A}$ and every $d$, there exists a chordal graph with chromatic number $d$ such that 
$\mathcal{A}$ uses $\Omega (d \cdot \log n)$ colors. Finally, for $d$-inductive graphs, Irani~\cite{I1} analyzed deterministic online 
algorithms with lookahead~$l$ and showed that the best competitiveness is $\Theta(\min \{\log n, n/l\})$.
A lower bound of $\Omega(\log\log n)$ on the competitive ratio of randomized online algorithms for $d$-inductive
graphs was given by Leonardi and Vitaletti~\cite{LV}. 

We address two further graph classes. Downey and McCartin~\cite{DC} studied online coloring of bounded treewidth graphs. 
For an introduction to treewidth see~\cite{Bo}. For any graph of treewidth $d$, \emph{First Fit} uses $O(d\cdot \log n)$ colors.
This is a consequence of Irani's work~\cite{I1} because a graph of treewidth $d$ is $d$-inductive~\cite{DC,I1}. Downey and 
McCartin~\cite{DC} showed that, on graphs of treewidth $d$, \emph{First Fit} can be forced to use
$\Omega(\frac{d}{\log (d+1)}\log n)$ colors. Last but not least, a disk graph is the intersection graph of a set of disks in the Euclidean plane.
Each vertex represents a disk; two vertices are adjacent if the two corresponding disks intersect. Online
coloring of disk graphs has received quite some attention because it models frequency assignment problems in wireless
communication networks, see~\cite{E2} for a survey. The best competitiveness achieved by a deterministic online algorithm is
$\Theta(\min\{\log n, \log \rho\})$, where $\rho$ is the ratio of the largest to smallest disk radius~\cite{C+,E1}. 
The result relies on the common assumption that an online algorithm does not use the disk representation, when making coloring 
decisions~\cite{C+,E1,E2}. It has been repeatedly raised as an open problem if the bound of $\Theta(\min\{\log n, \log \rho\})$
can be improved using randomization~\cite{C+,E1,E2}.

Recent work on online graph coloring has studied scenarios where an online algorithm can query oracle information 
about future input~\cite{BH1,BH2}. Moreover, online coloring of hypergraphs has been explored~\cite{BC,BC2}. 

\textbf{Our Contribution:} 
In this paper we settle the performance of online coloring algorithms for fundamental and
widely studied graph classes. More precisely, we prove lower bounds on the performance of online algorithms. 
These bounds match the best upper bounds known in the literature. An important contribution is that our bounds
also hold for randomized online algorithms, for which very few results were known.

First, in Sections~\ref{sec:det} and~\ref{sec:rand} we investigate chordal graphs. They have been studied extensively, cf.\ textbook~\cite{W1}.
We remind the reader that a graph is chordal if every induced cycle with four or more vertices has a chord. 
For a chordal graph $G$, the chromatic number $\chi(G)$ is equal to the largest clique size $\omega(G)$. Interval graphs
are a subfamily of chordal graphs. Chordal graphs in turn are perfect graphs, for which the offline coloring, maximum clique
and independent set problems can be solved in polynomial time. 

In Section~\ref{sec:det} we examine deterministic
online coloring algorithms. We prove that, for every deterministic algorithm $\mathcal{A}$ and every integer $d\geq 2$, there exists
a family of chordal graphs $G$ with $\chi(G)=d$ such that $\mathcal{A}$ uses $\Omega(d\cdot \log n)$ colors. This resolves 
the open problem raised by Irani~\cite{I1}. In Section~\ref{sec:rand} we extend this result to randomized online 
algorithms. The statement is identical to the one for deterministic algorithms, except that a randomized online algorithm uses
an expected number of $\Omega(d\cdot \log n)$ colors. Although the result for randomized algorithms is more general,
we give proofs for both deterministic and randomized policies. Our lower bound construction for deterministic algorithms
exhibits an adversarial strategy for generating worst-case graphs. Given this strategy, we show how to define a 
probability distribution on graphs so that Yao's principle~\cite{Y} can be applied. \emph{First Fit} colors
every chordal graph $G$ with $\chi(G)=d$ using $O(d\cdot \log n)$ colors. Hence, the optimal competitiveness
of deterministic and randomized online algorithms is $\Theta(\log n)$. 

In Section~\ref{sec:classes} we derive lower bounds for further graph classes, focusing on randomized online 
algorithms. For $d=2$, our lower bound construction for chordal graphs generates trees. It follows that, for
any randomized online algorithm $\mathcal{A}$, there exists a family of trees such that $\mathcal{A}$ needs an expected number
of $\Omega(\log n)$ colors. This complements the fundamental and early result by Bean~\cite{B1} for
deterministic algorithms. To the best of our knowledge, no lower bound on the performance of randomized 
online coloring algorithms for trees was previously known. Recall that trees have a chromatic number of~2. 
Vishwanathan~\cite{V1} gave a lower bound of $\Omega(\log n)$ on the expected number of colors
used by randomized online algorithms for graphs of chromatic number~2, i.e.\ bipartite graphs. However, the graphs in his construction have cycles.
Thus, Vishwanathan's lower bound does not apply to trees. Obviously, trees are planar and bipartite. Hence, our result for trees
directly implies that every randomized online algorithm can be forced to use $\Omega(\log n)$ colors in expectation for graphs of these two 
classes. The lower bounds are tight because known deterministic online algorithms color trees, planar and bipartite graphs 
with $O(\log n)$ colors~\cite{GL,I1,LST}. 

Section~\ref{sec:classes} also addresses inductive and bounded-treewidth graphs. Since every chordal graph $G$ is
$(\chi(G)-1)$-inductive and has treewidth $\chi(G)-1$, we derive the following results. For every randomized
online algorithm $\mathcal{A}$ and every $d\geq 1$, there exists a family of $d$-inductive graphs such that $\mathcal{A}$ uses
$\Omega(d\cdot \log n)$ colors. The same statement holds for graphs of treewidth $d$. We further show that the statement also
holds for strongly chordal graphs with chromatic number $d$. A chordal graph is strongly chordal if every  cycle
of even  length consisting of at least six vertices has an odd chord, i.e.\ an edge connecting two vertices that have an odd distance 
from each other in the cycle~\cite{F}. \emph{First Fit} colors any $d$-inductive graph and any graph of treewidth $d$ using $O(d\cdot \log n)$ colors. 
We conclude that, for all the graph classes considered so far, $\Theta(\log n)$ is the best competitiveness
of deterministic and randomized online algorithms. Finally, in Section~\ref{sec:classes} we study disk graphs. 
We prove that, for $d=2$, every graph of the probability distribution defined in Section~\ref{sec:rand} translates
to a disk graph. We then show that, for every randomized online algorithm $\mathcal{A}$ that does not use the disk representation, 
there exists a family of disk graphs forcing $\mathcal{A}$ to use an expected number of $\Omega(\min\{\log n, \log \rho\})$ colors, 
where $\rho$ is again the ratio of the largest to smallest disk radius. Hence randomization does not improve the asymptotic performance
of online coloring algorithms for disk graphs, cf.~\cite{C+,E1,E2}.

In Section~\ref{sec:look} we explore the settings where an online algorithm has lookahead or is equipped with
a reordering buffer. We show that a lookahead of size $O(n/\log n)$ does not improve the asymptotic performance
of randomized online algorithms. We prove the result for chordal graphs and then derive analogous results for
all the other graph classes. Irani~\cite{I1} gave a similar result for deterministic algorithms, considering 
inductive graphs. As a final result of this paper we demonstrate that a reordering buffer of size $n^{1-\epsilon}$, for any
$0<\epsilon \leq 1$, does not yield an improvement in the asymptotic performance guarantees of deterministic 
online algorithms. Again, we develop the result for chordal graphs and derive corollaries for the other graph
classes.

\textbf{Our Proof Technique:} We devise a technique for proving lower bounds that is relatively simple; we view
this as a strength of our results. The main idea is to recursively construct trees of cliques, which in turn form forests. 
In a recursive step the construction combines forests by adding or not adding a new clique in a specific way. Our construction
resembles the one by Bean~\cite{B1} but differs in an important aspect that allows us to obtain lower bounds for randomized
algorithms. The construction by Bean builds a tree $T_k$, $k\in \mathbb{N}$, by joining trees $T_j$, for $j<k$, so that
any deterministic online algorithm must use a $k$-th new color for \emph{some} vertex of $T_k$. This vertex then becomes
the root of $T_k$. An oblivious adversary, playing against a randomized online algorithm, cannot identify with sufficiently 
high probability such vertices exhibiting a new color. Instead, our construction maintains the invariant that the
root vertices of each forest use a large number of colors, given any deterministic online algorithm. For randomized
algorithms, a corresponding invariant holds with probability of at least $1/2$. 

\textbf{Convention:} Unless otherwise stated, logarithms are base~2.

\section{Deterministic online algorithms for chordal graphs}\label{sec:det}
We establish a lower bound on the performance of any deterministic online coloring algorithm. 
\begin{theorem}\label{th:det}
	Let $d \in \mathbb{N}$ with $d\geq 2$ be arbitrary. For every deterministic online algorithm $\mathcal{A}$
	and every $n \in \mathbb{N}$ with $n\geq 2d^2$, there exists a $n$-vertex chordal graph $G$ with chromatic number 
	$\chi(G)=d$ such that $\mathcal{A}$ uses $\Omega(d\cdot \log n)$ colors to color $G$. 
\end{theorem}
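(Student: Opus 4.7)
The plan is to construct, for each $d \geq 2$ and each level $k \geq 0$, a chordal graph $F_k$ of chromatic number at most $d$, together with a vertex arrival order, that forces any deterministic online algorithm to expend many colors. Each $F_k$ will be a forest whose components are \emph{trees of cliques}: graphs built by starting from a $(d-1)$-clique (the \emph{root clique}) and iteratively attaching new vertices to existing $(d-1)$-subcliques of the current graph. Such graphs are partial $d$-trees, hence chordal with maximum clique size at most $d$, so $\chi(F_k) \leq d$. The key invariant I would maintain is that, under any deterministic online algorithm $\mathcal{A}$, the union of colors assigned to the root cliques over all components of $F_k$ has size $\Omega(d \cdot k)$. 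Once this is proved, choosing $k = \Theta(\log n)$ with $|V(F_k)| \leq n$ yields $\Omega(d \log n)$ colors, as required.

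The base case $F_0$ is a single $(d-1)$-clique, forcing exactly $d-1$ colors on its root. For the inductive step $F_{k-1} \to F_k$, the adversary presents several independent copies of $F_{k-1}$ in sequence; by induction, the roots of each copy already use $\Omega(d(k-1))$ colors. Using a pigeonhole argument across the copies, the adversary identifies sub-collections of root cliques whose colorings share a common pattern and then extends the construction by attaching a new vertex to a specific $(d-1)$-subclique of each such root — the new vertex is forced to receive a color outside the $d-1$ already present on its attached subclique. By nesting $\Omega(d)$ such attachments within a single recursive step (each attachment enlarging the adversary's choice for the next sub-step via further pigeonholing), one forces $\Omega(d)$ fresh colors to appear in the root palette. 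The new root cliques of $F_k$ are formed from the newly attached vertices together with appropriate subsets of the old root cliques. Chordality is preserved at each attachment because a new vertex is adjacent only to an existing clique, so no induced cycle of length $\geq 4$ is created; the clique number remains $d$ because every attachment is to a $(d-1)$-subclique.

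The main obstacle I foresee is the recursive accounting: one must design the invariant and attachment pattern so that each level adds $\Omega(d)$ new colors while multiplying the total vertex count by only a constant (or at worst a polynomial-in-$d$) factor, keeping $|V(F_k)| \leq n$ for $k = \Theta(\log n)$. The hypothesis $n \geq 2d^2$ absorbs the $d$-dependent constants from the base case and the pigeonhole steps, ensuring $k \geq c \log n$ for some absolute constant $c > 0$. A secondary, though more routine, obstacle is verifying that the invariant is indeed preserved: this requires quantifying how many copies of $F_{k-1}$ suffice for the pigeonhole to produce sub-collections of roots whose joint colorings are rich enough to make each nested attachment force a genuinely new color. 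With these points resolved, the theorem follows by reading off the number of colors appearing on the root palette of $F_k$.
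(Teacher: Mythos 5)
There is a genuine gap, and it sits exactly at the point you flag as an ``obstacle'' but do not resolve. Your mechanism for gaining colors is to attach a new vertex to a $(d-1)$-subclique, which forces that vertex to avoid only the $d-1$ colors on that subclique. But your invariant concerns the \emph{union} of colors over all root cliques of the forest, and once this union has size $\Omega(d(k-1)) \gg d-1$, the algorithm can always recycle a color that is already in the global root palette but absent from the particular $(d-1)$-subclique you attached to. Hence a single attachment never forces a color that is new to the palette, and nesting $\Omega(d)$ such attachments only produces a new clique whose $d-1$ colors are mutually distinct but possibly all recycled; the invariant $|\mathcal{C}_\mathcal{A}(\text{roots})| \geq \Omega(dk)$ does not grow. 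Moreover, the pigeonhole over ``colorings sharing a common pattern'' is unquantified: the number of possible color sets on a root clique grows with the total number of colors used so far (itself $\Omega(dk)$), so finding matching copies may require a number of copies of $F_{k-1}$ that is exponential in $dk$, destroying the bound $|V(F_k)| \leq n$ for $k=\Theta(\log n)$. So the two pillars of your induction --- palette growth per level and constant blow-up per level --- are both unestablished.

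The paper's proof avoids both problems with a different construction. It uses only \emph{two} copies $G^l_{k-1}, G^r_{k-1}$ per level and an adaptive case analysis: if $|\mathcal{C}_\mathcal{A}(r(G^l_{k-1}) \cup r(G^r_{k-1}))| \geq \frac{d}{4}k$ already, take the plain union; otherwise the two root-color sets largely coincide, and the adversary adds a clique $R$ of size $d/2$ whose vertices are adjacent to \emph{all} root vertices of $G^l_{k-1}$, across all of its components --- not merely to one $(d-1)$-subclique. Because $R$ sees every color of $\mathcal{C}_\mathcal{A}(r(G^l_{k-1}))$, and fewer than $d/4$ colors of $\mathcal{C}_\mathcal{A}(r(G^r_{k-1}))$ lie outside that set, $R$ is forced to use more than $d/4$ colors not in $\mathcal{C}_\mathcal{A}(r(G^r_{k-1}))$; taking $r(G_k)=R \cup r(G^r_{k-1})$ restores the invariant. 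Chordality and $\omega = d$ are then verified directly for this ``clique adjacent to many roots'' attachment (it is not a pure partial-$d$-tree step), the vertex count merely doubles (plus $d/2$) per level, and no pigeonholing is needed. If you want to salvage your plan, you would need to replace the local $(d-1)$-subclique attachment by something that is adjacent to vertices carrying essentially the whole current root palette of one copy, and replace the pigeonhole by an adaptive case distinction of this kind.
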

The proof of Theorem~\ref{th:det} relies on Lemma~\ref{lem:det}, which we prove first.
\begin{lemma}\label{lem:det}
	Let $d \in \mathbb{N}$ with $d\geq 2$ be arbitrary. For every deterministic online algorithm $\mathcal{A}$ and 
	every $k\in \mathbb{N}$, there exists a chordal graph $G_k$ having chromatic number $\chi(G_k)=d$ and consisting of 
	$n_k\leq d2^k$ vertices such that $\mathcal{A}$ is forced to use at least $c_k\geq (d-1)k/4$ colors to color $G_k$. 
\end{lemma}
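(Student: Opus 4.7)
The plan is to prove Lemma~\ref{lem:det} by induction on $k$, strengthening the statement so that each $G_k$ comes equipped with a distinguished set $R_k$ of ``root'' vertices, and the inductive hypothesis guarantees that every deterministic online algorithm $\mathcal{A}$ is forced to use $c_k$ distinct colors on the vertices of $R_k$ alone, not merely somewhere in $G_k$. The invariants I would carry along the induction are: $G_k$ is chordal, $\chi(G_k) \leq d$, $|V(G_k)| \leq d \cdot 2^k$, and the roots in $R_k$ receive at least $c_k \geq (d-1)k/4$ distinct colors under $\mathcal{A}$. Since $R_k \subseteq V(G_k)$, the lemma follows directly from the third invariant. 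A small initial configuration, e.g.\ a single $d$-clique (or a single vertex for $d=2$), handles the base case with $c_0 = 1$.

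For the inductive step, I would present $\mathcal{A}$ with two disjoint copies $H_1, H_2$ of $G_{k-1}$, one after the other, and record the color sets $C_i$ assigned by $\mathcal{A}$ to the root set $R_i$ of $H_i$; by hypothesis $|C_i| \geq c_{k-1}$. The construction of $G_k$ would split into two cases depending on $|C_1 \cap C_2|$. In the \emph{separation case}, where $|C_1 \cup C_2|$ already exceeds $c_{k-1}$ by the required amount, I would simply take $G_k := H_1 \cup H_2$ and $R_k := R_1 \cup R_2$; chordality and $\chi \leq d$ are inherited trivially, and the root color bound improves by the desired margin. In the \emph{overlap case}, $|C_1 \cap C_2|$ is correspondingly large, and I would append a small structure (essentially a clique of size at most $d-1$ together with carefully chosen connecting edges) so that, for every color in $C_1 \cap C_2$, some preselected root of $R_1 \cup R_2$ with that color becomes a neighbor of each appended vertex. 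This would force $\mathcal{A}$ to color the appended vertices with fresh colors outside $C_1 \cap C_2$. The appended vertices then constitute the new $R_k$, possibly augmented with old roots carrying colors in the symmetric difference $C_1 \triangle C_2$. Amortizing over the two cases yields an additive gain of order $d-1$ every constant number of levels, matching the claimed bound $(d-1)k/4$ once the arithmetic is tightened.

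The main obstacle is executing the overlap case while simultaneously preserving chordality and the clique-number bound $\omega(G_k) \leq d$. The difficulty is that joining a single new vertex to two nonadjacent roots inside the same copy $H_i$ can create an induced $4$-cycle and destroy chordality. My intended remedy is to maintain $G_k$ together with an explicit representation as the intersection graph of subtrees of a host tree (Gavril's characterization of chordal graphs), and to extend the host tree in a controlled way during the overlap step so that the newly appended clique corresponds to a single host node and its member subtrees are chosen compatibly with the existing ones. Chordality is then automatic, and keeping the clique number at most $d$ reduces to ensuring that no host node is passed by more than $d$ subtrees; this can be arranged by choosing the attachment roots within each $H_i$ one per color, so that they sit in distinct branches of the host tree. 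Bounding the number of newly added vertices by $O(d)$ per inductive step finally gives $n_k \leq 2 n_{k-1} + O(d) \leq d \cdot 2^k$ by a routine induction, completing the argument.
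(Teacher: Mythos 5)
Your overall framework---induction on $k$, a distinguished root set on which the algorithm is forced to exhibit many colors, and a two-case split according to whether the two copies' root color sets $C_1,C_2$ are nearly disjoint or largely overlapping---is exactly the engine of the paper's proof. However, your overlap case has a genuine gap in the choice of the new root set. You let $R_k$ consist of the appended clique plus only those old roots whose colors lie in $C_1 \triangle C_2$, discarding every root carrying a color of $C_1 \cap C_2$. The appended vertices are only forced to avoid $C_1 \cap C_2$, so they may reuse colors of $C_1 \triangle C_2$; and when the algorithm simply mirrors its coloring on the second copy, so that $C_1 = C_2$ and the symmetric difference is empty, your new root set exhibits at most $d-1$ colors, no matter how large $c_{k-1}$ was. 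The invariant ``roots carry at least $c_{k-1}+\Omega(d)$ colors'' therefore collapses, and it cannot be rescued by pointing to the total number of colors used in $G_k$: once the roots exhibit only $O(d)$ colors, all later appended cliques can be colored with colors already used deep inside the graph (they are not adjacent to any vertex carrying those colors), so no further growth can be forced and the bound stalls at $O(d)$ instead of $\Omega(dk)$. The paper avoids this by \emph{keeping one copy's entire root set}: the new clique $R$ has size $d/2$ and is joined to \emph{all} root vertices of the left copy, and the new root set is $R \cup r(G^r_{k-1})$. In the overlap case $\left|\mathcal{C}_\mathcal{A}(r(G^r_{k-1}))\setminus \mathcal{C}_\mathcal{A}(r(G^l_{k-1}))\right| < d/4$, so $R$, being forbidden all colors of $r(G^l_{k-1})$, must use more than $d/4$ colors not occurring on $r(G^r_{k-1})$, giving $c_k \geq c_{k-1} + d/4$ while retaining all accumulated colors.

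A second, structural problem is your clique-number control. You attach a clique of size up to $d-1$ to one preselected root per common color and hope to place these representatives in ``distinct branches'', i.e.\ pairwise nonadjacent. That is not always arrangeable: in the worst case all colors of $C_1 \cap C_2$ occur on a single root clique in each copy, so any system of representatives contains many mutually adjacent roots, and the appended clique together with two adjacent representatives already forms a clique of size exceeding $d$. The paper sidesteps this by making every tree node (in particular every component's root set) a clique of exactly $d/2$ vertices, adding a new clique of exactly $d/2$ vertices, and joining it to the root cliques in full; roots of different components are nonadjacent, so $\omega(G_k)=d$ is preserved and chordality follows by a short direct argument. Your subtree-intersection (Gavril) representation would indeed give chordality, but it does not by itself resolve either the root-set counting gap or the clique-size issue above.
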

\begin{proof}
	We describe how an adversary constructs a chordal graph $G_k$, $k\in \mathbb{N}$. Such a graph is built up recursively and
	consists of graphs $G_j$, where $j<k$. We assume that $d$ is even. The construction of $G_k$ can be adapted easily if $d$ is 
	odd; details will be given later. On a high level $G_k$ is a forest, i.e. a collection of disjoint trees, each 
	having a distinguished root node. In every tree $T$ of $G_k$, each tree node represents a clique of size $d/2$ in $G_k$.
	If two tree nodes $u_T$ and $v_T$ are connected by a tree edge in $T$, then any two vertices $u\in u_T$
	and $v\in v_T$ are connected by an edge in $G_k$. Hence $u_T$ and $v_T$ form a clique of size $d$ in $G_k$. Since $G_k$
	is a forest, it consists of several connected components. One can add a final vertex and edges in order to connect the
	various trees; details will be given at the end of the proof. 
	
	We proceed with the concrete construction of $G_k$, for increasing values of $k\in \mathbb{N}$. As mentioned above, each tree $T$
	of $G_k$ has a distinguished root node consisting of $d/2$ vertices in $G_k$. Let $r(T)$ be the set of these $d/2$ vertices. 
	Moreover, let $r(G_k)$ be the union of these sets $r(T)$, taken over all $T$ of $G_k$. We refer to the elements of $r(G_k)$ as 
	the \emph{root vertices of $G_k$}. They are important because the online algorithm $\mathcal{A}$ will be forced to use a large number 
	of colors for $r(G_k)$. For any subset $V'$ of the vertices of $G_k$, let $\mathcal{C}_\mathcal{A}(V')$ be the set of colors used by $\mathcal{A}$ 
	to color $V'$. 
	
	The strategy of the adversary to generate a graph $G_k$ is adaptive, i.e.\ the exact structure of the graph depends 
	on the coloring decisions of $\mathcal{A}$. Nevertheless, during the bottom-up construction of $G_k$, for increasing $k\in \mathbb{N}$, 
	the following invariants will be maintained.
	\begin{enumerate}[(1)]
		\item Algorithm $\mathcal{A}$ uses at least $\frac{d}{4} \cdot k$ colors for the root vertices of $G_k$, i.e.\
		$\left| \mathcal{C}_\mathcal{A}\left( r(G_k) \right) \right| \geq \frac{d}{4} \cdot k$. 
		\item $G_k$ is a union of connected components, each of which can be represented by a tree $T$. Each tree node 
		is a clique of size $d/2$. Every tree $T$ has a distinguished root node containing a set $r(T)$ of $d/2$ root vertices in $G_k$.
		\item $G_k$ is chordal.
		\item The maximum clique size is $\omega(G_k) = d$.
		\item The number of vertices satisfies $n_k \leq \frac{d}{2} \cdot (2^{k+1} -1)$.
	\end{enumerate}
	Invariants (3) and (4) together imply that $\chi(G_k) =\omega(G_k) = d$ holds.
	In invariant~(1) and the following technical
	exposition integer values are compared to expressions of the form $\frac{d}{4} \cdot k$, which might not be integer.
	We remark that the statements, comparisons and calculations hold without considering the rounded expressions. 
	
	\textbf{Construction of the base graph \bm{$G_1$}}: $G_1$ is a clique of size $d$. The adversary may present the corresponding
	vertices in an arbitrary order. The set of root vertices $r(G_1)$ is an arbitrary subset $R$ of size $d/2$ of the vertices
	of $G_1$. The remaining $d/2$ vertices form a second tree node. The resulting tree $T$ is depicted in Figure~\ref{fig:G_1}. 
	We can easily verify properties~(1--5).\\
	(1) Since $R=r(G_1)$ is a clique of size $d/2$, $\mathcal{A}$ uses $d/2$ colors for it, i.e.\ 
	$\left|\mathcal{C}_\mathcal{A}(r(G_1))\right| \geq \frac{d}{4}$.\\
	(2) $G_1$ consists of one connected component which represents a tree, as described above and shown in Figure~\ref{fig:G_1}.\\
	(3) $G_1$ is a clique and thus chordal.\\
	(4) The maximum clique size $\omega(G_1)$ is exactly $d$.\\
	(5) There holds $n_1 = d \leq {\frac{3}{2}}\cdot d = \frac{d}{2} \cdot (2^{1+1} -1)$.
	
\begin{figure}[ht]
	\begin{minipage}[t]{7.2cm}
		\begin{center}
			\includegraphics[width=0.03\textwidth]{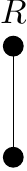}
		\end{center}
		\captionsetup{justification=centering}
		\caption{The tree $T$ representing $G_1$}\label{fig:G_1}
	\end{minipage}
	\begin{minipage}[t]{9.2cm}
		\begin{center}
			\includegraphics[width=0.65\textwidth]{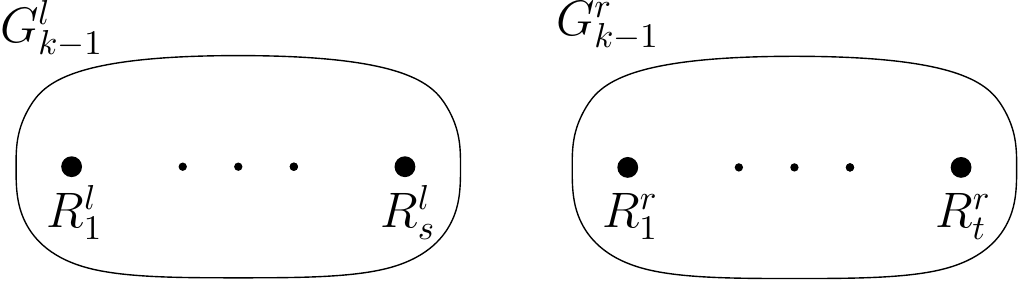}
		\end{center}
		\captionsetup{justification=centering}
		\caption{The general structure of $G^l_{k-1}$ and\\$G^r_{k-1}$ restricted to the root vertices}
		\label{fig:G_2}
	\end{minipage}
\end{figure}
	
	\textbf{Construction of the graph \bm{$G_k$}, \bm{$k>1$}}: Assume that the adversary can generate graphs $G_j$, for any $j<k$, satisfying
	invariants~(1--5). The construction of $G_k$ proceeds as follows. First the adversary recursively generates two independent
	graphs of type $G_{k-1}$, i.e.\ it twice executes the strategy for generating a graph $G_{k-1}$. Let $G^l_{k-1}$ and 
	$G^r_{k-1}$ be these two graphs. They are created one after the other.  
	We remark that $G^l_{k-1}$ and $G^r_{k-1}$ need not be identical
	because $\mathcal{A}$'s coloring decision in one graph can affect its decisions in the other one. 
	
	In the following we focus on the root vertices of $G^l_{k-1}$ and $G^r_{k-1}$. In particular, we consider the colors used by $\mathcal{A}$. 
	Invariant~(1) implies that $\left|\mathcal{C}_\mathcal{A}(r(G^l_{k-1}))\right| \geq \frac{d}{4}(k-1)$ and 
	$\left|\mathcal{C}_\mathcal{A}(r(G^r_{k-1}))\right| \geq \frac{d}{4}(k-1)$.
	We distinguish two cases depending on the total number of colors used, i.e.\ the cardinality of 
	$\mathcal{C}_\mathcal{A}(r(G_{k-1}^l) \cup r(G_{k-1}^r))$. To this end we introduce some notation. Assume that $G_{k-1}^l$ consists
	of $s$ connected components, which we number in an arbitrary way. Each component/tree $T^l_i$ has a distinguished root 
	containing a set $r(T^l_i)$ of $d/2$ root vertices. We abbreviate $R_i^l = r(T^l_i)$, $1\leq i \leq s$. Similarly, assume that 
	$G_{k-1}^r$ consists of $t$ connected components. Set $r(T^r_j)$ is the set of root vertices in the component $T^r_j$. 
	Let $R_j^l = r(T^r_j)$, $1\leq j \leq t$. There holds $r(G_{k-1}^l)= \bigcup_{i=1}^s R_i^l$ and $r(G_{k-1}^r)= \bigcup_{j=1}^t R_j^r$.
	Figure~\ref{fig:G_2} shows the general structure of $G^l_{k-1}$ and $G^r_{k-1}$ by focusing on the roots. The left-hand side
	of the figure depicts $G^l_{k-1}$ as a union of connected components rooted at $R_1^l, \ldots, R_s^l$, respectively. The right-hand
	side shows $G^r_{k-1}$ as a collection of components rooted at $R_1^r, \ldots, R_s^r$.
	\smallskip
	
	\textbf{\emph{Case 1:}}
	Assume that $\left| \mathcal{C}_\mathcal{A}(r(G_{k-1}^l) \cup r(G_{k-1}^r)) \right| \geq \frac{d}{4} \cdot k$. In this
	case the adversary defines $G_k$ as the union of $G^l_{k-1}$ and $G^r_{k-1}$. No further vertices or edges are added. It is easy
	to verify the five invariants because $G^l_{k-1}$ and $G^r_{k-1}$ satisfy them by inductive assumption.\\
	(1) The condition of Case~1 ensures $\left|\mathcal{C}_\mathcal{A}(r(G_k)) \right| = \left| \mathcal{C}_\mathcal{A}(r(G_{k-1}^l) \cup r(G_{k-1}^r))\right| 
	\geq \frac{d}{4} \cdot k$.\\
	(2) The invariant is satisfied since $G_k$ is the union of $G^l_k$ and $G^r_k$. \\
	(3) $G_k$ is chordal because $G^l_k$ and $G^r_k$ are, and no further vertices or edges have been added.\\
	(4) There holds $\omega(G_k)=d$, as $\omega(G^l_{k-1})=\omega(G^r_{k-1})=d$.\\
	(5) Let $n^l_{k-1}$ and $n^r_{k-1}$ be the number of vertices in $G^l_{k-1}$ and $G^r_{k-1}$, respectively. There holds
	$n_k = n^l_{k-1}+n^r_{k-1} \leq 2 \cdot (\frac{d}{2} \cdot (2^{k}-1)) = \frac{d}{2} \cdot (2^{k+1}-2) 
	\leq \frac{d}{2} \cdot (2^{k+1}-1)$. The first inequality follows because (5)~holds for $n^l_{k-1}$ and $n^r_{k-1}$.
	\smallskip
	
	\textbf{\emph{Case 2:}} Next assume that $\left| \mathcal{C}_A(r(G_{k-1}^l) \cup r(G_{k-1}^r)) \right| < \frac{d}{4} \cdot k$.
	In this case the adversary adds a set $R$ of $d/2$ vertices that form a clique. Moreover, for every vertex of $R$ there is an 
	edge to every vertex in $R_i^l$, for $i=1,\ldots, s$. In other words, every vertex of $R$ has edges to all root vertices
	of $r(G_{k-1}^l)$. The vertices of $R$ together with their adjacent edges may be presented by the adversary
	in an arbitrary order. The resulting structure is depicted in Figure~\ref{fig:G_3}. Set $R$ and the connected components of $G^l_{k-1}$ 
	rooted at $R_1^l, \ldots, R_s^l$ form a single component rooted at $R$. There is a tree edge between $R$ and every $R_i^l$, $1\leq i \leq s$. 
	The newly created component forms a tree rooted at $R$ because the components of $G_{k-1}^l$ represent trees rooted at 
	$R_1^l, \ldots, R_s^l$. Graph $G_k$ is the union of the new component and the components of $G_{k-1}^r$. The set of
	root vertices of $G_k$ consists of $R$ and the root vertices of $G_{k-1}^r$. Formally, $r(G_k) = R\cup R_1^r\cup \ldots, \cup R_t^r$. 
	It remains to verify the five invariants.
	
	\begin{figure}[h]
		\centering
		\includegraphics[width=0.40\textwidth]{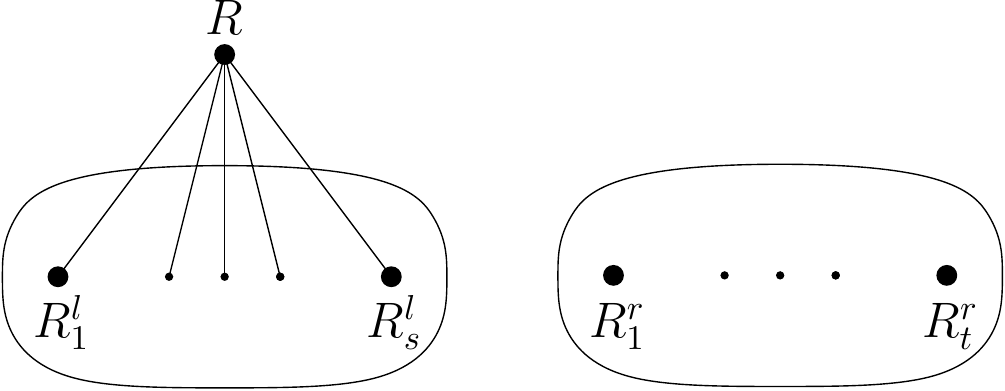}
		\captionsetup{justification=centering}
		\caption{The graph $G_k$ with the new addition of $R$}\label{fig:G_3}
	\end{figure}
	
	\noindent
	(1) We analyze the number of colors that $\mathcal{A}$ uses for the root vertices in $G_k$. In a first step, among the colors 
	$\mathcal{C}_\mathcal{A}(r(G_{k-1}^l)) \cup \mathcal{C}_\mathcal{A}(r(G_{k-1}^r))$ for the roots of $G_{k-1}^l$ and $G_{k-1}^r$, we upper bound the number $q$ of colors occurring in ${\cal C}_\mathcal{A}(r(G_{k-1}^r))$ only. By assumption 
	$\left|\mathcal{C}_\mathcal{A}(r(G_{k-1}^l)) \cup \mathcal{C}_\mathcal{A}(r(G_{k-1}^r)) \right| = 
	\left| \mathcal{C}_\mathcal{A}(r(G_{k-1}^l) \cup r(G_{k-1}^r)) \right| < \frac{d}{4} \cdot k$. 
	There holds $\mathcal{C}_\mathcal{A}(r(G_{k-1}^l)) 
	\geq \frac{d}{4}(k-1)$. We obtain $q = 
	\linebreak 
	\left|{\cal C}_\mathcal{A}(r(G_{k-1}^r))\setminus{\cal C}_\mathcal{A}(r(G_{k-1}^l))\right| = 
	\left|{\cal C}_\mathcal{A}(r(G_{k-1}^r)) \cup {\cal C}_\mathcal{A}(r(G_{k-1}^l))\right| - \left|{\cal C}_\mathcal{A}(r(G_{k-1}^l))\right| < \frac{d}{4}$. 
	Next consider the vertices in $R$. We upper bound the number of colors from ${\cal C}_\mathcal{A}(r(G_{k-1}^r))$
	that $\mathcal{A}$ can use for $R$. Observe that ${\cal C}_\mathcal{A}(r(G_{k-1}^r))$ is the disjoint union of 
	$\mathcal{C}_\mathcal{A}(r(G_{k-1}^l))\cap\mathcal{C}_\mathcal{A}(r(G_{k-1}^r))$ and $\mathcal{C}_\mathcal{A}(r(G_{k-1}^r))\setminus \mathcal{C}_\mathcal{A}(r(G_{k-1}^l))$.
	Every vertex of $R$ is adjacent to every vertex in $r(G_{k-1}^l)$. Hence,
	$\mathcal{A}$ cannot apply a color occurring in $\mathcal{C}_\mathcal{A}(r(G_{k-1}^r))\cap \mathcal{C}_\mathcal{A}(r(G_{k-1}^l))$ to a vertex in $R$.
	Only a color of $\mathcal{C}_\mathcal{A}(r(G_{k-1}^r))\setminus \mathcal{C}_\mathcal{A}(r(G_{k-1}^l))$ is feasible, and the latter set has cardinality $q<d/4$. 
	Since $R$ is a clique of size $d/2$ algorithm $\mathcal{A}$ must use at least $d/2-q >d/4$ colors not contained in $\mathcal{C}_\mathcal{A}(r(G_{k-1}^r))$ 
	to color the vertices of $R$. As $r(G_k) = R\cup r(G_{k-1}^r)$, we conclude $\left|\mathcal{C}_\mathcal{A}(r(G_k))\right| = 
	\left|\mathcal{C}_\mathcal{A}(R\cup r(G_{k-1}^r))\right| = \left|\mathcal{C}_\mathcal{A}(r(G_{k-1}^r))\right| +\left|\mathcal{C}_\mathcal{A}(R)\setminus \mathcal{C}_\mathcal{A}(r(G_{k-1}^r))\right| 
	\geq \frac{d}{4}(k-1) +\frac{d}{4}= \frac{d}{4}k$.\\
	(2) By construction $G_k$ is a collection of connected components, forming trees rooted at $R$ and $R_1^r, \ldots, R_t^r$,
	respectively.\\ 
	(3) In $G_k$ consider a simple cycle $C$ with at least four vertices and assume that at least one vertex is in $R$. If three or
	more vertices of $C$ are in $R$, then there is a chord because $R$ is a clique. If $C$ contains one or two vertices of $R$, then
	$C$ can visit only one connected component of $G^l_{k-1}$. Suppose that it visits the one rooted at $R_i^l$. Cycle
	$C$ must contain two vertices of $R_i^l$. Each of these two vertices has an edge to every vertex of $R$ in $C$.
	Hence $C$ has a chord. Since $G_{k-1}^l$ and $G_{k-1}^r$, and thus the components rooted at $R_1^l, \ldots, R_s^l$ and
	$R_1^r, \ldots, R_t^r$, are chordal, so is $G_k$. \\
	(4) Set $R$ and each $R_i^l$, $1\leq i \leq s$, form a clique of size $d$. The vertices of $R$ are not connected to
	any vertices outside $R_i^l$, $1\leq i \leq s$. Hence no other cliques are formed by the addition of $R$. Since
	$\omega(G^l_{k-1})=\omega(G^r_{k-1})=d$ it follows $\omega(G_k)=d$.\\
	(5) Again, let $n^l_{k-1}$ and $n^r_{k-1}$ be the number of vertices in $G^l_{k-1}$ and $G^l_{k-1}$. We have $n_k = n^l_{k-1}+n^r_{k-1} + \frac{d}{2} \leq 2 \cdot (\frac{d}{2} \cdot (2^{k}-1)) + \frac{d}{2} = 
	\frac{d}{2} \cdot (2^{k+1}-2) + \frac{d}{2} = \frac{d}{2} \cdot (2^{k+1}-1)$. \\
	The construction and analysis of $G_k$ is complete.
	
	Graph $G_k$ consists of several connected components if $k>1$. The adversary can create a connected graph by adding a final vertex 
	$v_f$ that has an edge to exactly one root vertex in each of the components. The resulting graph remains chordal because there is no simple cycle 
	containing $v_f$. By the addition of $v_f$ the maximum clique size does not change. Including $v_f$ the total number
	of vertices is upper bounded by $\frac{d}{2}(2^{k+1}-1) +1 \leq d2^k$ because $d\geq 2$. The lemma follows from invariants~(1)
	and (3--5) because $\chi(G_k)= \omega(G_k) =d$. 
	
	We finally address the case that $d$ is odd. In this case the adversary executes the graph construction described above
	for parameter $d-1$, which is even. In the end when $G_k$ is generated for the desired $k$, the adversary adds a final
	vertex to each base graph $G_1$. This vertex has edges to every other vertex of the corresponding $G_1$. This increases 
	the maximum clique size from $d-1$ to $d$. The new graph remains chordal. The number of colors used by algorithm $\mathcal{A}$
	is at at least $\frac{d-1}{4} k$. We observe that the number of base graphs $G_1$ in $G_k$ is $2^{k-1}$. Hence, in the extended graph 
	the total number of vertices is upper bounded by 
	$\frac{d-1}{2} (2^{k+1}-1) + 2^{k-1} \leq \frac{d}{2} (2^{k+1}-1)$. If $k>1$, the adversary can add a final vertex to link 
	the various components. Again the lemma follows. 
\end{proof}
\begin{proof}[Proof of Theorem~\ref{th:det}] 
	Given $d$ and $n$, let $k=\lfloor \log(n/d)\rfloor$. There holds $k\in \mathbb{N}$ because $n\geq 2d^2 > 2d$. For every deterministic
	online algorithm, by Lemma~\ref{lem:det}, there exists a chordal graph $G_k$ with chromatic number $\chi(G_k)=d$ such that
	$\mathcal{A}$ uses at least $c_k\geq (d-1)k/4$ colors. Graph $G_k$ has $n_k \leq d2^k$ vertices. By the choice of $k=\lfloor \log(n/d)\rfloor$,
	we have $n_k\leq n$. To $G_k$ we add $n-n_k$ vertices, all of which have one edge to an arbitrary vertex of $G_k$. The resulting
	$n$-vertex graph remains chordal and $\chi(G)=d$. Since $d\geq2$, there holds $c_k\geq dk/8$. We have $k\geq \log n-\log d-1$. 
	Inequality $n\geq 2d^2$ is equivalent to $d\leq \sqrt{n/2}$. Thus, $k\geq \log(n/2)-1/2\cdot \log(n/2) = 1/2\cdot \log (n/2)$. 
	As $n\geq 2d^2\geq 4$, there holds $\log(n/2)\geq 1/2\cdot \log n$. Hence, the number of colors used by
	$\mathcal{A}$ is at least $c_k\geq d\log n /32$.
\end{proof} 
In Theorem~\ref{th:det} the lower bound on $n$ can be reduced from $2d^2$ to $2d^{1+\epsilon}$, for any $0<\epsilon<1$. Then the number of colors used by $\mathcal{A}$ is $\Omega(\epsilon \cdot d \cdot \log n)$. 

\section{Randomized online algorithms for chordal graphs}\label{sec:rand}
We extend the result of Theorem~\ref{th:det} to randomized algorithms against oblivious adversaries. 
\begin{theorem}\label{th:rand}
	Let $d \in \mathbb{N}$ with $d\geq 2$ be arbitrary. For every randomized online algorithm $\mathcal{A}$ and every
	$n \in \mathbb{N}$ with $n\geq 12d^2$, there exists a $n$-vertex chordal graph $G$ with chromatic number $\chi(G)=d$, 
	presented by an oblivious adversary, such that the expected number of colors used by $\mathcal{A}$ to color $G$ is $\Omega(d\cdot \log n)$. 
\end{theorem}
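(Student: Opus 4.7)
My approach is to apply Yao's minimax principle: I will define a probability distribution $\mathcal{D}_k$ on chordal graphs (with presentation orders) and show that for every deterministic online algorithm $\mathcal{A}$ the expected number of colors used on a distinguished set of ``root'' vertices is at least $dk/4$. Yao's principle will then yield, for any randomized algorithm, some graph in $\mathrm{supp}(\mathcal{D}_k)$ on which the expected number of colors is $\Omega(dk)$. Setting $k=\lfloor\log(n/d)\rfloor$ and, if necessary, padding each realization up to $n$ vertices with degree-one vertices attached to a root (which preserves chordality and $\chi=d$) will deliver the $\Omega(d\log n)$ bound provided $n\geq 12d^2$.

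\textbf{The distribution.} $\mathcal{D}_k$ imitates the adversarial construction of Lemma~\ref{lem:det}, but the Case~1/Case~2 choice, which was adaptive there, will be replaced by independent fair coin flips. Specifically, $\mathcal{D}_1$ is a deterministic $d$-clique with an arbitrary fixed $d/2$-root set. For $k>1$, I draw $H_1,H_2\sim\mathcal{D}_{k-1}$ independently and present $H_1$ followed by $H_2$; then flip a fair coin $Z_k$. If $Z_k=0$, take the union and set $r(\tilde G_k)=r(H_1)\cup r(H_2)$. If $Z_k=1$, also flip $X_k\in\{1,2\}$ uniformly and append a $d/2$-clique $R$ adjacent to every vertex of $r(H_{X_k})$, with $r(\tilde G_k)=R\cup r(H_{3-X_k})$. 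Every realization is structurally identical to one of the two outcomes in Lemma~\ref{lem:det}, so chordality, $\omega=d$ and the vertex bound $n_k\leq d\cdot 2^k$ carry over.

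\textbf{Key inductive claim.} The core step is to prove by induction on $k$ that for every deterministic $\mathcal{A}$ and every starting state, $E[|\mathcal{C}_\mathcal{A}(r(\tilde G_k))|]\geq dk/4$. The base case is trivial since $r(\tilde G_1)$ is a clique of size $d/2$. For the step, write $A_i=\mathcal{C}_\mathcal{A}(r(H_i))$, $S_i=|A_i|$, and $I=|A_1\cap A_2|$. I will apply the inductive hypothesis to $H_1$ and, conditioning on the random state after $H_1$, to $H_2$ as well---here the universality of the hypothesis over starting states is crucial---obtaining $E[S_1],E[S_2]\geq d(k-1)/4$. Under $Z_k=0$, $|\mathcal{C}_\mathcal{A}(r(\tilde G_k))|=|A_1\cup A_2|=S_1+S_2-I$. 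Under $Z_k=1$, the $d/2$ colors on $R$ must avoid $A_{X_k}$, so at most $|A_{3-X_k}\setminus A_{X_k}|=S_{3-X_k}-I$ of them can fall in $A_{3-X_k}$; a short case distinction will give $|\mathcal{C}_\mathcal{A}(R\cup A_{3-X_k})|\geq I+d/2$ irrespective of $X_k$. Averaging over $Z_k$, the $E[I]$ terms will cancel exactly:
\[
E\bigl[|\mathcal{C}_\mathcal{A}(r(\tilde G_k))|\bigr] \geq \tfrac{1}{2}E[S_1+S_2-I] + \tfrac{1}{2}\bigl(E[I]+\tfrac{d}{2}\bigr) = \tfrac{1}{2}\bigl(E[S_1]+E[S_2]\bigr) + \tfrac{d}{4} \geq \tfrac{dk}{4}.
\]

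\textbf{Main obstacle and finish.} The hardest part, and the key design choice, will be to find a non-adaptive randomization that leaves no residual $E[I]$ after averaging: neither ``always take the union'' nor ``always attach $R$'' provides a $d/4$ gain per level on its own, but a fair coin between the two cases achieves the cancellation for every way $\mathcal{A}$ splits colors between $H_1$ and $H_2$. Once the expected invariant is in place, Yao's principle will immediately yield for every randomized $\mathcal{A}$ some $G\in\mathrm{supp}(\mathcal{D}_k)$ with $E[|\mathcal{C}_\mathcal{A}(G)|]\geq E[|\mathcal{C}_\mathcal{A}(r(G))|]\geq dk/4$. Choosing $k=\lfloor\log(n/d)\rfloor$ (the lower bound $n\geq 12d^2$ leaving room for constants), handling odd $d$ exactly as in Lemma~\ref{lem:det}, and padding each sampled graph to exactly $n$ vertices will give the claimed $\Omega(d\log n)$ lower bound.
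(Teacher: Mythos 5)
Your proposal is correct, and while it follows the paper's overall framework (recursive tree-of-cliques gadget from Lemma~\ref{lem:det}, a non-adaptive distribution over ``add a root clique $R$ or not'', then Yao's principle plus padding), the key lemma is argued by a genuinely different route. The paper's Lemma~\ref{lem:rand} maintains a \emph{tail-probability} invariant ($\Pr[|\mathcal{C}_\mathcal{A}(r(G_k))|\geq dk/4]>1/2$) and, to push the success probability per level back above $1/2$, packs twelve copies of $\mathcal{G}_{k-1}$ into six independently coin-flipped pairs and amplifies via $1-(7/8)^6>1/2$, ending with an expected bound of $(d-1)k/8$ on graphs of size at most $d\cdot 12^k$. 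You instead maintain the invariant \emph{in expectation}, with only two copies and one fair coin per level: the exact identity $|A_1\cup A_2|=S_1+S_2-I$ in the union branch and the bound $|\mathcal{C}_\mathcal{A}(R)\cup A_{3-X_k}|\geq I+d/2$ in the attach branch (valid since the $d/2$ clique colors avoid $A_{X_k}$, so at most $S_{3-X_k}-I$ of them lie in $A_{3-X_k}$) make the $E[I]$ terms cancel, closing the induction at $dk/4$ with graphs of size at most $d2^k$. You also correctly flag the one subtle point, namely that the inductive hypothesis must be quantified over arbitrary starting states so it can be applied to $H_2$ conditionally on the algorithm's history on $H_1$; this mirrors the paper's clause that its invariant holds ``independently of other components already colored.'' Your variant buys smaller constructions, a better constant (so even the weaker hypothesis $n\geq 2d^2$ would suffice), and arguably a cleaner argument; the paper's probability-$1/2$ invariant is coarser but its ``with probability $>1/2$ many colors are forced'' form is what gets reused verbatim in the later lookahead and disk-graph sections, where the twelve-fold structure is also exploited geometrically. (Minor nit: the extra coin $X_k$ choosing which side $R$ attaches to is unnecessary—attaching always to $H_1$ gives the same bound—but it does no harm.)
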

In order to prove Theorem~\ref{th:rand} we resort to Yao's principle~\cite{Y} and show the following Lemma~\ref{lem:rand}.
\begin{lemma}\label{lem:rand}
	Let $d \in \mathbb{N}$ with $d\geq 2$ be arbitrary. For every $k\in \mathbb{N}$, there exists a probability 
	distribution on a set ${\cal G}_k$ of chordal graphs with the following properties. For every $G_k \in {\cal G}_k$, $\chi(G_k)=d$ 
	and the number of vertices is at most $d\cdot 12^k$. The expected number of colors used by any 
	deterministic online algorithm to color a graph drawn according to the distribution is at least $(d-1)k/8$.
\end{lemma}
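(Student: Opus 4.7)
The plan is to mimic the adversarial construction of Lemma~\ref{lem:det}, but to replace the data-dependent choice between Case~1 and Case~2 by an independent random coin. The resulting probability distribution on chordal graphs, together with Yao's principle, will yield Theorem~\ref{th:rand}.

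I would define $\mathcal{G}_k$ by induction on $k$. The base distribution $\mathcal{G}_1$ is a point mass on the $d$-clique with an arbitrary half of its vertices designated as roots. In the inductive step, one draws $G^l_{k-1}$ and $G^r_{k-1}$ independently from $\mathcal{G}_{k-1}$, presents them to the algorithm in that order, and then flips a fair coin: on heads output the disjoint union (Case~1), and on tails introduce a fresh $d/2$-clique $R$ whose every vertex is joined to every vertex of $r(G^l_{k-1})$ and set $r(G_k) = R \cup r(G^r_{k-1})$ (Case~2). Chordality, $\chi(G_k) = \omega(G_k) = d$, and the vertex-count bound are verified exactly as in Lemma~\ref{lem:det}; the figure $d\cdot 12^k$ is easily absorbed by the linear recurrence $n_k \le 2 n_{k-1} + d/2$, leaving ample slack for the odd-$d$ correction and for the final linking vertex.

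The heart of the argument is a lower bound on $X_k := |\mathcal{C}_\mathcal{A}(r(G_k))|$. Let $U = |\mathcal{C}_\mathcal{A}(r(G^l_{k-1})) \cup \mathcal{C}_\mathcal{A}(r(G^r_{k-1}))|$. On Case~1 we have $X_k = U$. On Case~2, the identity $\mathcal{C}_\mathcal{A}(R) \cap \mathcal{C}_\mathcal{A}(r(G^l_{k-1})) = \emptyset$ (forced by the edges from $R$ to $r(G^l_{k-1})$) combined with $|\mathcal{C}_\mathcal{A}(R)| = d/2$ and the inclusion $\mathcal{C}_\mathcal{A}(R) \cap \mathcal{C}_\mathcal{A}(r(G^r_{k-1})) \subseteq \mathcal{C}_\mathcal{A}(r(G^r_{k-1})) \setminus \mathcal{C}_\mathcal{A}(r(G^l_{k-1}))$ yields, via inclusion--exclusion, $X_k + U \ge X^l_{k-1} + X^r_{k-1} + d/2$. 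Averaging the two branches gives $E[X_k \mid G^l_{k-1}, G^r_{k-1}, \text{coloring}] \ge \frac{1}{2}(X^l_{k-1} + X^r_{k-1}) + d/4$, and taking the outer expectation reduces the induction to proving $E[X^l_{k-1}], E[X^r_{k-1}] \ge c(k-1)$ for a constant $c$ of order $d$. Choosing $c = d/4$ then yields $E[X_k] \ge c(k-1) + d/4 \ge ck$, which for $d \ge 2$ comfortably exceeds the required $(d-1)k/8$.

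The main obstacle is the bound $E[X^r_{k-1}] \ge c(k-1)$: when $\mathcal{A}$ starts processing $G^r_{k-1}$ it already carries colors from $G^l_{k-1}$, so it is not ``fresh'' and the inductive hypothesis does not apply out of the box. The resolution is to condition on the realization of $G^l_{k-1}$ and on $\mathcal{A}$'s state after presenting it; the continuation is then a deterministic online algorithm facing an independent draw from $\mathcal{G}_{k-1}$, so the inductive hypothesis applies pointwise and integrates to the desired unconditional bound. A small but crucial second point is that the coin is flipped only after both subgraphs have been presented, so $\mathcal{A}$'s colorings of $G^l_{k-1}$ and $G^r_{k-1}$ (and hence the random variables $X^l_{k-1}$, $X^r_{k-1}$, and $U$) are identical across the two branches, which is what legitimates the simple averaging step.
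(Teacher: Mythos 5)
Your proposal is correct, and it takes a genuinely different route from the paper. The paper does not run an induction on expectations at all: it replicates the construction twelvefold per level (six independent pairs drawn from ${\cal G}_{k-1}$, each with its own fair coin deciding whether the clique $R_i$ is added) and maintains the invariant that, with probability strictly greater than $1/2$, the roots of $G_k$ carry at least $dk/4$ colors; the amplification $1-(7/8)^6>1/2$ is what forces the $d\cdot 12^k$ size bound and the final factor $1/2$ in $(d-1)k/8$. Your argument instead keeps a single pair per level and exploits the exact cancellation of $U=|\mathcal{C}_\mathcal{A}(r(G^l_{k-1}))\cup\mathcal{C}_\mathcal{A}(r(G^r_{k-1}))|$ when averaging the two branches of the coin: $X_k=U$ on one branch and $X_k\geq X^l_{k-1}+X^r_{k-1}+d/2-U$ on the other, giving $E[X_k\mid\cdot]\geq\tfrac12(X^l_{k-1}+X^r_{k-1})+d/4$ and hence $E[X_k]\geq dk/4$ by induction. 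The two correctness points you flag are exactly the ones that matter and you resolve them properly: the inductive hypothesis must be stated for every deterministic online algorithm regardless of previously colored components (the paper's invariant~(1) carries the same rider), so conditioning on the realization and coloring of $G^l_{k-1}$ lets you apply it to the continuation algorithm on the independent draw $G^r_{k-1}$; and since the coin only affects vertices presented after both subgraphs, $X^l_{k-1}$, $X^r_{k-1}$ and $U$ are independent of it, which legitimizes the averaging. What each approach buys: yours is leaner and quantitatively stronger (graphs of size at most $d2^k$, expectation at least $(d-1)k/4$, comfortably implying the stated $(d-1)k/8$ and $d\cdot 12^k$), whereas the paper's high-probability invariant and the twelve-subgraph structure are reused verbatim later (the six-pair layout underlies the disk-graph embedding of Lemma~\ref{lem:disk} and the phase structure in Section~\ref{sec:look}), so adopting your construction there would require minor adaptations of those arguments.
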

\begin{proof}
	For every $k\in \mathbb{N}$ we define a set ${\cal G}_k$ of chordal graphs $G_k$, each having a chromatic number of $d$. Moreover,
	we specify the order in which the vertices of any $G_k\in {\cal G}_k$ are presented to a deterministic online algorithm $\mathcal{A}$. 
	The distribution on ${\cal G}_k$ is the uniform one, i.e.\ each $G_k\in {\cal G}_k$ is chosen with the same probability.
	We assume that $d$ is even. The definition of ${\cal G}_k$ can be adapted easily if $d$ is odd; details are given
	at the end of the proof. 
	
	The set ${\cal G}_k$ is built recursively based on ${\cal G}_{k-1}$. The construction of graphs $G_k\in {\cal G}_k$ is a generalization 
	of the one presented in the proof of Lemma~\ref{lem:det}. A major difference is that any $G_k\in {\cal G}_k$ contains twelve graphs of
	${\cal G}_{k-1}$, which are grouped into six pairs. For each pair a clique of size $d/2$ may or may not be added. As before, every
	$G_k\in {\cal G}_k$ is a union of connected components. Each such component can be represented by a tree with a distinguished root
	vertex. Every tree vertex is a set of $d/2$ vertices forming a clique in $G_k$. We reuse the notation of the proof of 
	Lemma~\ref{lem:det}. Given $G_k\in {\cal G}_k$, for any component/tree $T$ of $G_k$, $r(T)$ is the set of $d/2$ vertices in 
	the root of $T$. Set $r(G_k)$ is the union of all $r(T)$, taken over all $T$ of $G_k$. Finally ${\cal C}_\mathcal{A}(r(G_k))$ is the
	set of colors used by $\mathcal{A}$ for the vertices of $r(G_k)$. 
	
	During the recursive construction of ${\cal G}_k$, for increasing $k\in \mathbb{N}$, the following invariants are maintained. 
	Compared to the proof of Lemma~\ref{lem:det}, (1)~and (5) differ. Invariant~(1) states that, for a randomly chosen $G_k$, 
	every deterministic online algorithm needs, with probability greater than 1/2, at least
	$dk/4$ colors for the root vertices $r(G_k)$.
	Invariant~(5) gives an adjusted bound on the size of any $G_k$. 
	
	\begin{enumerate}[(1)]
		\item If $G_k$ is chosen uniformly at random from ${\cal G}_k$, then for any deterministic online algorithm $\mathcal{A}$,
		$\Pr[\left| \mathcal{C}_\mathcal{A}\left(r(G_k) \right) \right| \geq dk/4] > 1/2$. This holds independently of other connected
		components $\mathcal{A}$ might have already colored.
		\item Every $G_k\in {\cal G}_k$ is a union of connected components, each of which can be represented by a tree $T$. Each tree node 
		is a clique of size $d/2$. Every tree $T$ has a distinguished root containing a set $r(T)$ of $d/2$ root vertices in $G_k$.
		\item Every $G_k\in {\cal G}_k$ is chordal.
		\item For every $G_k\in {\cal G}_k$, the maximum clique size is $\omega(G_k) = d$.
		\item For every $G_k\in {\cal G}_k$, the number $n_k$ of vertices satisfies $n_k \leq  d(12^k-1) $.
	\end{enumerate}
	
	\textbf{Graph set \bm{${\cal G}_1$}}:
	The set only contains $G_1$, the base graph used in the proof of Lemma~\ref{lem:det}, which is a 
	clique of size $d$. The vertices of $G_1$ may be presented in any order to a deterministic online algorithm. Again, the set $r(G_1)$ of 
	root vertices is an arbitrary subset of size $d/2$ of the vertices of $G_1$. The remaining $d/2$ vertices form a second tree node. 
	Every deterministic online algorithm, with probability~1, needs $d/2$ colors for $r(G_1)$, which implies~(1). Invariants (2--4) are obvious. 
	As for~(5), there holds $n_1= d \leq d(12-1)$.
	\smallskip
	
	\textbf{Graph set \bm{${\cal G}_k$}, \bm{$k>1$}}: Assume that the set ${\cal G}_{k-1}$ satisfying (1--5) has been constructed. First, in order
	to build ${\cal G}_k$, all possible $12$-tuples of graphs of ${\cal G}_{k-1}$ are formed. In assigning tuple entries, 
	graphs of ${\cal G}_{k-1}$ are selected with replacement. Hence, a total of $\left|{\cal G}_{k-1}\right|^{12}$ tuples are built.
	For each tuple, $2^6$ graphs are added to ${\cal G}_k$ in the following way. Let $\tau$ be any fixed tuple. Six graph
	pairs are formed. For $i=1,\ldots, 6$, let $G^{i,l}_{k-1}$ and $G^{i,r}_{k-1}$ be the graphs in tuple entries $2i-1$ and $2i$,
	respectively. To the $i$-th pair a clique $R_i$ of size $d/2$ may or may not be added. The possible additions, over the six pairs,
	can be represented by a bit vector $\vec{b}=(b_1,\ldots,b_6)$. More specifically, given $\tau$ and any such bit vector $\vec{b}$,
	a graph $G_k$ is constructed as follows. For $i=1,\ldots, 6$, a subgraph $G^i_k$ is generated. If $b_i=0$, then $G^i_k$ is the union of 
	$G^{i,l}_{k-1}$ and $G^{i,r}_{k-1}$. The set $r(G^i_k)$ of root vertices is the union of $r(G^{i,l}_k)$ and $r(G^{i,r}_k)$. 
	If $b_i=1$, then a clique $R_i$ of size $d/2$ is added to $G^{i,l}_{k-1}$ and $G^{i,r}_{k-1}$. Every vertex of $R_i$ has an edge
	to every vertex of $r(G^{i,l}_{k-1})$. Subgraph $G^i_{k}$ consists of the newly created component rooted at $R_i$ and $r(G^{i,r}_{k-1})$, i.e. $r(G^i_{k-1}) = R_i\cup r(G^{i,r}_{k-1})$. Graph $G_k$ is the union of the $G^i_k$ and the set $r(G_k)$ is the union of the 
	$r(G^i_k)$, $1\leq i\leq 6$. When $G_k$ is presented to $\mathcal{A}$
	, the subgraphs $G^i_k$ are revealed one by one, 
	$1\leq i\leq 6$. For each $G^i_k$ the graphs $G^{i,l}_{k-1}$ and $G^{i,r}_{k-1}$ are presented recursively. Finally, the vertices 
	of $R_i$, if they exist, are shown. It remains to verify the invariants.
	
	(1) Let $G_k$ be a graph drawn uniformly at random from ${\cal G}_k$. Consider any subgraph $G^i_k$, $1\leq i\leq 6$, containing
	$G^{i,l}_k$ and $G^{i,r}_k$. By the construction of ${\cal G}_k$, both $G^{i,l}_k$ and $G^{i,r}_k$ represent graphs drawn 
	uniformly at random from ${\cal G}_{k-1}$. Let $\mathcal{A}$ be any deterministic online algorithm. Invariant~(1) for $k-1$ implies 
	$\Pr[|{\cal C}_{\mathcal{A}}(r(G^{i,l}_{k-1}))| \geq d(k-1)/4] > 1/2$ and $\Pr[|{\cal C}_{\mathcal{A}}(r(G^{i,r}_{k-1}))|\geq d(k-1)/4] > 1/2$. 
	Moreover it implies $\Pr[|{\cal C}_{\mathcal{A}}(r(G^{i,l}_{k-1}))| \geq d(k-1)/4\ \mbox{and}\  |{\cal C}_{\mathcal{A}}(r(G^{i,r}_{k-1}))| \geq d(k-1)/4] > 1/4$.
	Let ${\cal E}^i$ be the latter event that $|{\cal C}_{\mathcal{A}}(r(G^{i,l}_{k-1}))| \geq d(k-1)/4$ and 
	$|{\cal C}_{\mathcal{A}}(r(G^{i,r}_{k-1}))|\geq d(k-1)/4$ hold.
	
	Assume that ${\cal E}^i$  holds. There are two cases, which correspond to those analyzed in the proof of Lemma~\ref{lem:det}. If
	$|{\cal C}_{\mathcal{A}}(r(G^{i,l}_{k-1}) \cup r(G^{i,l}_{k-1}))|\geq dk/4$, then $|{\cal C}_\mathcal{A}(r(G^i_k))| \geq dk/4$ if $R_i$ is not 
	added to $G^{i,l}_k$ and $G^{i,r}_k$, which happens with probability $1/2$. On the other hand, if
	$|{\cal C}_{\mathcal{A}}(r(G^{i,l}_{k-1}) \cup r(G^{i,r}_{k-1}))|< dk/4$, then the addition of $R_i$ ensures that 
	$|{\cal C}_{\mathcal{A}}(r(G^i_k))| \geq dk/4$. Again, $R_i$ is added with probability $1/2$. In either case, given ${\cal E}^i$, 
	$\Pr[|{\cal C}_{\mathcal{A}}(r(G^i_k))| \geq dk/4]\geq 1/2$. We obtain
	$\Pr[|{\cal C}_{\mathcal{A}}(r(G^i_k))| \geq dk/4] \geq \Pr[|{\cal C}_{\mathcal{A}}(r(G^i_k))| \geq dk/4 \mid {\cal E}^i]\cdot \Pr[{\cal E}^i] 
	\geq \frac{1}{2}\cdot \frac{1}{4} = \frac{1}{8}$. Equivalently, $\Pr[|{\cal C}_{\mathcal{A}}(r(G^i_k))| < dk/4] \leq 7/8$. 
	If $|{\cal C}_{\mathcal{A}}(r(G_k))| < dk/4$, then $|{\cal C}_{\mathcal{A}}(r(G^i_k))| < dk/4$ must hold true for $i=1,\ldots, 6$. The 
	latter event occurs with probability at most $(7/8)^6$. We conclude $\Pr[|{\cal C}_{\mathcal{A}}(r(G_k))| \geq dk/4] \geq 1 -(7/8)^6 > 1/2$.
	This holds independently of $\mathcal{A}$'s coloring decisions made in other components. 
	
	Invariants (2--4) are immediate, based on the arguments given in the proof of Lemma~\ref{lem:det}. As for the number of
	vertices of any $G_k\in {\cal G}_k$, we observe that it is upper bounded by $12\cdot d \cdot (12^{k-1}-1) + 6 \cdot d/2 < d \cdot (12^k-1)$. 
	
	If $d$ is odd, the above construction of sets ${\cal G}_k$, $k\geq 1$, is performed for parameter $d-1$. In 
	${\cal G}_1$, graph $G_1$ is extended by a single vertex having edges to all other vertices in $G_1$. Invariant~(5) holds
	because any graph $G_k\in {\cal G}_k$ contains $12^{k-1}$ copies of $G_1$. 
	
	The lemma follows from (1) and (3--5). In particular,~(1) implies that the expected number of colors used by
	any deterministic online algorithm is at least $1/2 \cdot (d-1)k/4 = (d-1)k/8$. 
\end{proof}

\begin{proof}[Proof of Theorem~\ref{th:rand}]
	For the given $d$ and $n$, choose $k=\lfloor \log(n/d)\rfloor$. In this proof, logarithms are base~12. There holds 
	$k\in \mathbb{N}$, because $n\geq 12 d^2 > 12 d$. By Lemma~\ref{lem:rand}, there exists a probability
	distribution on a set ${\cal G}_k$ of chordal graphs with chromatic number $d$ such that the expected number of colors used
	by every deterministic online algorithm is at least $(d-1)k/8$. The number of vertices of any graph in ${\cal G}_k$
	is at most $d 12^k$. Hence, by the choice of $k$, it is upper bounded by $n$. For every $G_k\in {\cal G}_k$, we add a
	suitable number of vertices so that the total number of vertices is equal to $n$. Every new vertex has one edge to an 
	arbitrary vertex in the original graph $G_k$. Hence, there exists a probability distribution on a set of $n$-vertex 
	graphs with chromatic number $d$ such that the expected number of colors used by any deterministic online algorithm
	is at least $(d-1)k/8$. By Yao's principle~\cite{Y}, for every randomized online algorithm, there exists an $n$-vertex chordal
	graph $G$ with $\chi(G)=d$ such that the expected number of color is $c_k\geq (d-1)k/8\geq dk/16$. 
	We have $k\geq \log n-\log d-1 = \log(n/12)-\log d\geq 1/2\cdot \log(n/12)$, because $12 d^2 \leq n$, 
	and hence $d\leq \sqrt{n/12}$. Since $12 d^2 \leq n$, we have $\log(n/12) \geq 1/3\cdot \log n$ and
	thus $c_k \in \Omega(d\cdot \log n)$. 
\end{proof}
Again, in Theorem~\ref{th:rand} we can reduce the lower bound on $n$ from $12d^2$ to $12 d^{1+\epsilon}$, for any $0<\epsilon<1$. The expected number of colors used by $\mathcal{A}$ is $\Omega(\epsilon \cdot d \cdot \log n)$. 

\section{Further graph classes}\label{sec:classes}
Given Theorem~\ref{th:rand}, we can derive lower bounds on the performance of randomized online coloring algorithms
for other important graph classes.
\subsection{Trees, planar, bipartite, $d$-inductive and bounded-treewidth graphs}
\begin{corollary}\label{cor:tree}
	For every randomized online algorithm $\mathcal{A}$ and every $n \in \mathbb{N}$ with $n\geq 48$, there exists a $n$-vertex tree $T$, 
	presented by an oblivious adversary, such that the expected number of colors used by $\mathcal{A}$ to color $T$ is $\Omega(\log n)$. 
\end{corollary}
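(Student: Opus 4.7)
The plan is to specialize Theorem~\ref{th:rand} to the case $d = 2$ and verify that, in this regime, the chordal graphs produced by the construction of Lemma~\ref{lem:rand} are in fact trees. With $d=2$, every ``tree node'' in the recursive construction is a clique of size $d/2 = 1$, i.e.\ a single vertex; the base graph $G_1$ is a clique on two vertices, namely a single edge; and each new clique $R_i$ added at a recursive step is also a single vertex. So the combinatorial tree structure that the construction already maintains abstractly coincides with the actual graph structure, and no ``thick'' cliques of size larger than one are ever created.

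I would then check by induction on $k$ that every graph in $\mathcal{G}_k$ is a tree. The base case $G_1$ is trivially a tree. For the inductive step, $G_k$ is a vertex-disjoint union of six subgraphs $G^i_k$, each obtained from two forests $G^{i,l}_{k-1}$ and $G^{i,r}_{k-1}$ by optionally attaching a single new vertex $R_i$ adjacent to the root vertex of every tree in $G^{i,l}_{k-1}$ (and to nothing else). Attaching a single vertex to existing roots of a forest yields a forest, so $G_k$ is a forest. The final vertex used in the proofs of Lemma~\ref{lem:det} and Theorem~\ref{th:rand} to merge the components of $G_k$ into a single connected graph has exactly one edge into each component and therefore introduces no cycle; likewise, the padding vertices added in the proof of Theorem~\ref{th:rand} to reach exactly $n$ vertices are pendants. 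With the tree property established, the corollary is immediate: substituting $d = 2$ in Theorem~\ref{th:rand}, the hypothesis $n \geq 12 d^2$ becomes $n \geq 48$, matching the statement, and the conclusion $\Omega(d \log n)$ expected colors becomes $\Omega(\log n)$.

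The main, and really only, nontrivial step is the inductive check that none of the operations in the construction introduces a cycle. Because each $R_i$ attaches only to the single-vertex roots of $G^{i,l}_{k-1}$ and never to any vertex of $G^{i,r}_{k-1}$, the obstruction one might worry about — two distinct paths from $R_i$ to some old vertex via the previously built subgraphs — cannot arise. Once this structural observation is in place, the randomized lower bound is inherited verbatim from Theorem~\ref{th:rand}, with no further probabilistic work required.
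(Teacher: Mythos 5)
Your proposal is correct and follows essentially the same route as the paper's own proof: specialize Theorem~\ref{th:rand} to $d=2$, note that every clique of size $d/2$ in the construction of Lemma~\ref{lem:rand} is a single vertex so the constructed graphs are forests, and link the components (and add pendant padding vertices) without creating cycles, so that $n\geq 12d^2 = 48$ gives the stated bound $\Omega(\log n)$. The paper states this more tersely, while you spell out the induction, but the content is the same.
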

\begin{proof}
	The corollary follows from Lemma~\ref{lem:rand} and Theorem~\ref{th:rand} because, for $d=2$, the constructed graphs
	are trees. Every clique of size $d/2$ added in the construction is a singleton vertex. Indeed, every constructed graph is 
	a forest whose components can be linked by an additional vertex.
\end{proof}
Since trees are planar and bipartite graphs, we obtain the following two corollaries.
\begin{corollary}\label{cor:planar}
	For every randomized online algorithm $\mathcal{A}$ and every $n \in \mathbb{N}$ with $n\geq 48$, there exists a $n$-vertex planar graph $G$, 
	presented by an oblivious adversary, such that the expected number of colors used by $\mathcal{A}$ to color $G$ is $\Omega(\log n)$. 
\end{corollary}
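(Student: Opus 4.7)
The plan is to obtain Corollary~\ref{cor:planar} as an immediate consequence of Corollary~\ref{cor:tree}. The work has already been done: Corollary~\ref{cor:tree} produces, for every randomized online algorithm $\mathcal{A}$ and every $n\geq 48$, an $n$-vertex tree $T$ on which $\mathcal{A}$ uses $\Omega(\log n)$ colors in expectation, against an oblivious adversary.

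The only additional observation needed is that every tree is a planar graph. Since the class of planar graphs contains the class of trees, the adversarial distribution on trees constructed in Lemma~\ref{lem:rand} (for $d=2$) is also a distribution on planar graphs, and the chromatic number remains at most $2$. Therefore the same lower bound of $\Omega(\log n)$ on the expected number of colors carries over verbatim to planar graphs.

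I do not foresee any obstacle here: no new construction, no new invariant, and no new probabilistic argument are required. The proof is simply the sentence "trees are planar, so Corollary~\ref{cor:tree} applies."
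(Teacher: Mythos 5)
Your proposal is correct and is exactly the paper's own argument: the paper derives Corollary~\ref{cor:planar} from Corollary~\ref{cor:tree} with the single remark that trees are planar, so the tree lower bound construction applies verbatim. Nothing further is needed.
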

\begin{corollary}\label{cor:bipartite}
	For every randomized online algorithm $\mathcal{A}$ and every $n \in \mathbb{N}$ with $n\geq 48$, there exists a $n$-vertex bipartite graph $G$, 
	presented by an oblivious adversary, such that the expected number of colors used by $\mathcal{A}$ to color $G$ is $\Omega(\log n)$. 
\end{corollary}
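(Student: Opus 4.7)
The plan is to deduce Corollary \ref{cor:bipartite} as an immediate consequence of Corollary \ref{cor:tree}. The key observation is that every tree is a bipartite graph: since a tree contains no cycles at all, it in particular contains no odd cycles, which is the standard characterization of bipartiteness. Hence any family of trees is automatically a family of bipartite graphs.

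Concretely, I would fix a randomized online algorithm $\mathcal{A}$ and an integer $n \geq 48$. Corollary \ref{cor:tree} provides an oblivious adversary producing an $n$-vertex tree $T$ on which $\mathcal{A}$ uses an expected $\Omega(\log n)$ colors. I would then simply note that $T$, viewed as an undirected graph, is bipartite, so taking $G := T$ yields the desired $n$-vertex bipartite graph satisfying $\mathbb{E}[\mathcal{A}(G)] = \Omega(\log n)$. The threshold $n \geq 48$ and the asymptotic bound $\Omega(\log n)$ are inherited unchanged from Corollary \ref{cor:tree}, since the graph and the adversarial strategy are literally the same.

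There is no real obstacle here; the entire content is the containment "trees $\subseteq$ bipartite graphs." The only thing worth being slightly careful about is that the adversary provided by Corollary \ref{cor:tree} is already oblivious and already respects the same vertex count $n$, so no additional padding or reduction is required. In particular, the bipartition of $T$ need not be known to or revealed to $\mathcal{A}$; we are simply using the fact that the graph handed to $\mathcal{A}$ happens to belong to the class of bipartite graphs, which is all that the statement of the corollary demands.
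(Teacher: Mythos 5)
Your proposal is correct and coincides with the paper's own argument: the paper derives Corollary~\ref{cor:bipartite} from Corollary~\ref{cor:tree} precisely by noting that trees are bipartite, so the adversarial tree instance serves directly as the bipartite instance. No further comment is needed.
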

Every chordal graph $G$ is $(\chi(G)-1)$-inductive and has treewidth $\omega(G) -1 = \chi(G)-1$~\cite{Bo}. 
Hence, Theorem~\ref{th:rand} gives the following two results. 
\begin{corollary}\label{cor:dinductive}
	Let $d \in \mathbb{N}$ be an arbitrary positive integer. For every randomized online algorithm $\mathcal{A}$ and every
	$n \in \mathbb{N}$ with $n\geq 12d^2$, there exists a $n$-vertex $d$-inductive graph $G$, presented by an oblivious adversary, 
	such that the expected number of colors used by $\mathcal{A}$ to color $G$ is $\Omega(d\cdot \log n)$. 
\end{corollary}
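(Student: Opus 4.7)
The plan is to derive Corollary~\ref{cor:dinductive} as an essentially immediate consequence of Theorem~\ref{th:rand}, using the stated structural fact that every chordal graph $G$ is $(\chi(G)-1)$-inductive. The entire content of the corollary is a class-containment observation combined with the randomized lower bound we already have in hand for chordal graphs.

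Concretely, given $d \geq 2$ and $n \geq 12 d^2$, I would apply Theorem~\ref{th:rand} with the parameter $d$ and the given $n$. This yields, for any randomized online algorithm $\mathcal{A}$, an oblivious-adversary distribution on $n$-vertex chordal graphs $G$ with $\chi(G) = d$ such that the expected number of colors used by $\mathcal{A}$ on $G$ is $\Omega(d \cdot \log n)$. I would then invoke the fact cited from~\cite{Bo}: any chordal graph $G$ is $(\chi(G)-1)$-inductive. In our case this makes every such $G$ a $(d-1)$-inductive graph. Since $(d-1)$-inductive graphs are in particular $d$-inductive (the inductive numbering witnessing $(d-1)$-inductiveness also witnesses $d$-inductiveness), the same distribution furnishes $n$-vertex $d$-inductive graphs with the required expected lower bound. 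Hence Corollary~\ref{cor:dinductive} follows.

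Strictly speaking, there is almost no obstacle: the whole argument is a forwarding of Theorem~\ref{th:rand} through an inclusion of graph classes. The only thing to watch is the quantifier on $d$ in the corollary (any positive integer) versus the assumption $d \geq 2$ in Theorem~\ref{th:rand}. For $d \geq 2$ the reduction above is immediate. For the degenerate case $d = 1$ (which would force a forest), the statement is already implied by Corollary~\ref{cor:tree} after the usual absorption of the constant into the $\Omega(\cdot)$, since a tree is $1$-inductive. Thus the corollary can be stated without any loss for all positive $d$, with the bound on $n$ taken to be the appropriate constant multiple of $d^2$ as in the theorem.
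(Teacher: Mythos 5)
Your proposal is correct and matches the paper's own derivation: the paper simply notes that every chordal graph $G$ is $(\chi(G)-1)$-inductive and forwards Theorem~\ref{th:rand} through this class inclusion, exactly as you do (your handling of the trivial $d=1$ case via Corollary~\ref{cor:tree} is a harmless extra detail the paper leaves implicit).
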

\begin{corollary}\label{cor:treewidth}
	Let $d \in \mathbb{N}$ be an arbitrary positive integer. For every randomized online algorithm $\mathcal{A}$ and every
	$n \in \mathbb{N}$ with $n\geq 12d^2$, there exists a $n$-vertex graph $G$ of treewidth $d$, presented by an oblivious adversary, 
	such that the expected number of colors used by $\mathcal{A}$ to color $G$ is $\Omega(d\cdot \log n)$. 
\end{corollary}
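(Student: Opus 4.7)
The plan is to reduce Corollary~\ref{cor:treewidth} directly to Theorem~\ref{th:rand} via the classical structural fact (see~\cite{Bo}) that every chordal graph $G$ satisfies $\operatorname{tw}(G) = \omega(G) - 1$, and hence $\operatorname{tw}(G) = \chi(G) - 1$ since chordal graphs are perfect. Thus a chordal graph with chromatic number $d+1$ is automatically a graph of treewidth exactly $d$, and a lower bound for coloring chordal graphs of chromatic number $d+1$ translates immediately into a lower bound for coloring graphs of treewidth $d$.

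Concretely, given $d \in \mathbb{N}$ and $n$ large enough, I would invoke Theorem~\ref{th:rand} with the parameter $d' = d+1$ in place of $d$. For every randomized online algorithm $\mathcal{A}$, this produces an $n$-vertex chordal graph $G$, chosen by an oblivious adversary, with $\chi(G) = d+1$ and with expected number of colors used by $\mathcal{A}$ at least $\Omega((d+1)\cdot \log n) = \Omega(d\cdot \log n)$. By the fact recalled above, $\operatorname{tw}(G) = d$, so $G$ is an $n$-vertex graph of treewidth $d$, and the conclusion follows.

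Two small checks round this off. First, applying Theorem~\ref{th:rand} with parameter $d+1$ formally requires $n \geq 12(d+1)^2$ rather than $n \geq 12d^2$; but these two bounds differ only by a constant factor absorbed into the $\Omega(\cdot)$ notation, so the asymptotic statement of the corollary is unaffected (one could equivalently state the hypothesis as $n \geq 12(d+1)^2$). Second, the padding step in the proof of Theorem~\ref{th:rand}, which attaches additional pendant vertices to an arbitrary vertex of the constructed graph to reach exactly $n$ vertices, does not increase the treewidth: attaching a degree-one vertex to a graph of treewidth $\geq 1$ leaves the treewidth unchanged, and $d \geq 1$ throughout.

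There is no real obstacle here; the only modest point is being careful about the off-by-one between chromatic number and treewidth and about the corresponding shift in the threshold on $n$. Everything else is mechanical, and an analogous argument (using the fact that a chordal graph $G$ is $(\chi(G)-1)$-inductive) yields Corollary~\ref{cor:dinductive} in the same way.
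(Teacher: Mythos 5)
Your proposal is correct and matches the paper's own derivation: the paper obtains Corollary~\ref{cor:treewidth} from Theorem~\ref{th:rand} using exactly the fact that a chordal graph $G$ has treewidth $\omega(G)-1=\chi(G)-1$, with the off-by-one shift in the parameter and in the threshold on $n$ absorbed into the constants of the $\Omega(\cdot)$ bound. Your additional remarks (applying the theorem with $d+1$, and noting that pendant padding vertices do not affect treewidth) are accurate but add nothing beyond the paper's one-line argument.
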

The following corollary gives a result for strongly chordal graphs.
\begin{corollary}\label{cor:stronglychordal}
	Let $d \in \mathbb{N}$ be an arbitrary positive integer. For every randomized online algorithm $\mathcal{A}$ and every
	$n \in \mathbb{N}$ with $n\geq 12d^2$, there exists a $n$-vertex strongly chordal graph $G$ with chromatic number $\chi(G)=d$, presented by an oblivious adversary, 
	such that the expected number of colors used by $\mathcal{A}$ to color $G$ is $\Omega(d\cdot \log n)$. 
\end{corollary}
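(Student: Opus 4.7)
The plan is to show that the distribution on chordal graphs produced by Lemma~\ref{lem:rand} and used in the proof of Theorem~\ref{th:rand} already places its mass on \emph{strongly} chordal graphs. Once that is established, the corollary is immediate: the same Yao's principle argument, applied to this distribution now reinterpreted as one on strongly chordal $d$-chromatic graphs, yields the claimed $\Omega(d\log n)$ lower bound.

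To verify strong chordality I would invoke the classical characterization that a chordal graph is strongly chordal if and only if it contains no induced $k$-sun for any $k \geq 3$. Recall that a $k$-sun consists of a clique on $v_1,\ldots,v_k$ together with additional vertices $u_1,\ldots,u_k$ where $u_i$ is adjacent to exactly $v_i$ and $v_{(i \bmod k)+1}$, and in particular every vertex of a sun lies in some triangle of the sun.

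The central structural observation about the graphs $G_k \in \mathcal{G}_k$ is an ``all-or-nothing'' adjacency: each connected component is a tree $T$ of cliques of size $d/2$ with complete bipartite connections between tree-adjacent nodes, so for any tree node $u_T$, a vertex outside $u_T$ is adjacent either to every vertex of $u_T$ or to none, depending on whether its own tree node is tree-adjacent to $u_T$. Since every triangle of $G_k$ lies in a maximal clique and every maximal clique is formed by two tree-adjacent tree nodes, the $K_k$ of any hypothetical induced sun with $k \geq 3$ is distributed among at most two tree nodes $u_T, v_T$. A short case analysis then shows that a pendant $u_i$ adjacent to exactly $v_i, v_{i+1}$ is impossible: if $v_i, v_{i+1}$ lie in the same tree node and $K_k$ has a further vertex there, the pendant is forced by all-or-nothing to be adjacent to that further vertex too; if $v_i, v_{i+1}$ lie in different tree nodes, the pendant would have to lie in one of the two tree nodes itself (any alternative requires a third tree node tree-adjacent to both $u_T$ and $v_T$, creating a triangle in the acyclic $T$) and is hence adjacent to all of both tree nodes. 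In every case the pendant is forced to be adjacent to some $v_j \notin \{v_i, v_{i+1}\}$, ruling out the sun.

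The final linking vertex $v_f$ is handled trivially because its neighbors, one root per component, are pairwise non-adjacent, so $v_f$ lies in no triangle and thus in no induced sun; the extra vertex added to each base graph $G_1$ when $d$ is odd only enlarges $G_1$ to $K_d$, inside which the all-or-nothing property holds trivially, and the degree-one padding vertices used in the proof of Theorem~\ref{th:rand} introduce no triangles either. I expect the main obstacle to be nothing conceptual, only the careful book-keeping of the case analysis on how the $K_k$ of a putative sun can split between two tree-adjacent tree nodes; once this is carried out, the corollary follows directly from Theorem~\ref{th:rand}.
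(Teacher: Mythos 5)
Your overall route---reduce to Theorem~\ref{th:rand} by showing that every graph in the support of the distribution from Lemma~\ref{lem:rand} is strongly chordal, and certify strong chordality via Farber's sun-free characterization of strongly chordal graphs---is viable and genuinely different from the paper's proof, which works directly from the odd-chord definition: there, one shows that any even cycle of length at least six contains two non-consecutive vertices in a common tree node, uses that such vertices have identical neighborhoods outside the node, and exhibits an odd chord; no external characterization is needed. Your structural observations are correct: the all-or-nothing adjacency of tree nodes, the fact that any clique of $G_k$ meets at most two (tree-adjacent) tree nodes because the node graph is a forest, and the dismissal of the degree-one padding vertices and the linking vertex, which lie in no triangle and hence in no sun.

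However, the case analysis as you state it has a genuine gap: it is not true that every pendant is individually impossible. You treat (a) $v_i,v_{i+1}$ in one tree node with a further clique vertex in that \emph{same} node, and (b) $v_i,v_{i+1}$ in different tree nodes, but you omit the case that $v_i,v_{i+1}$ lie in the same tree node $w_T$ and are the \emph{only} clique vertices of the sun in $w_T$, all remaining clique vertices lying in the adjacent node $x_T$. In that case a vertex of a third tree node $y_T$ that is tree-adjacent to $w_T$ but not to $x_T$ is adjacent to exactly $v_i$ and $v_{i+1}$ among the clique, so such a pendant really occurs in the constructed graphs (already for $d\geq 4$: take a root node $w_T$ with two children $x_T,y_T$). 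The sun is excluded only by a global argument, not pendant by pendant: each pendant forces its pair $\{v_i,v_{i+1}\}$ to coincide with the \emph{entire} intersection of the sun's clique with one of the at most two tree nodes hosting that clique, and since the $k\geq 3$ pairs $\{v_1,v_2\},\ldots,\{v_k,v_1\}$ are pairwise distinct, at most two of the $k$ pendants can be realized---a contradiction. The same repair, rather than the ``trivial'' dismissal, is needed for the odd-$d$ extension vertex, where a vertex of an attached clique $R$ (or the extension vertex itself) can likewise be adjacent to exactly the two clique vertices lying in $r(G_1)$; the counting over distinct consecutive pairs again closes the case. With this step added, your argument goes through and the corollary follows from Theorem~\ref{th:rand} as you intend.
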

\begin{proof}
	We prove that every graph $G_k \in \mathcal{G}_k $ constructed in Lemma \ref{lem:rand} is strongly chordal. The corollary 
	then immediately follows from Theorem \ref{th:rand}.  Let $ N(v) $ denote the neighborhood of a vertex $ v $ in $ G_k $.
	For $ d \leq 3  $, $ G_k $ does not posses an even cycle and thus $ G_k $ is strongly chordal. For $ d \geq 4 $, consider an even 
	cycle $C$ of length at least six in $ G_k $. We first argue that there must exist two non-consecutive vertices $u$ and $v$ in $C$ that
	are part of the same tree node $w_T$. If $C$ visits only one or two tree nodes, this is obvious, given the length of $C$. If $C$
	visits at least three tree nodes, the desired fact follows from invariant~(2) in Lemma \ref{lem:rand}, which ensures that
	each connected component of $G_k$ forms a tree of tree nodes. 
	
	Hence let $u$ and $v$ be two non-consecutive vertices in $C$ 
	belonging to the same tree node $w_T$. As $w_T$ is a clique, $\{u,v\}$ is an edge in $G_k$. Moreover 
	$ N(u)\backslash \{v\} = N(v)\backslash \{u\} $ because, again, $w_T$ is a clique and its vertices are connected to the same 
	vertices that are not part of $ w_T $. Consider a neighbor $ s $ of $ v $ in $ C $. We differ between two cases. First, if $ s $ 
	is also a neighbor of $ u $  in $ C $, then there must exist a neighbor $ x $ of $ u $ and a neighbor $ y $ of $ v $ in $ C $ because 
	$ C $ has length at least six. Therefore, starting at $x$, cycle $C$ visits vertices $x, u, s, v, y$ in this order. 
	We have $ N(u)\backslash \{v\} = N(v)\backslash \{u\} $, which implies that $ \{u,y\}$ is an edge in $G_k$.  
        The distance between $u$ and $y$ in $ C $ is three so that $ \{u,y\} $ is an odd chord of $ C $. On the other hand, if $ s $ is not 
        a neighbor of $ u $ in $ C $, then $ \{u,s\}$ is an edge in $G_k$ since $ N(u)\backslash \{v\} = N(v)\backslash \{u\} $.
        Thus $ \{u,s\} $ is a chord of $ C $. Moreover $ \{u,v\} $ is a chord of $ C $ because $ u $ and $ v $ are non-consecutive in $ C $. 
        We conclude that either 
        $ \{u,v\} $ or $ \{s,u\} $ is an odd chord of $ C $ because distances between vertices $u$ and $v$ and vertices $s$ and $u$ in $C$ 
        differ by exactly one. 
\end{proof}

\subsection{Disk graphs}
A disk graph is the intersection graph of disks in the Euclidean plane. Every vertex corresponds to
a disk; two vertices are connected by an edge if the respective disks intersect. The following theorem implies that it is 
not possible to improve on the performance of deterministic online coloring algorithms by using
randomization. We use the common assumption that when an online algorithm makes coloring decisions, it does not use the disk 
representation~\cite{C+,E1,E2}.
\begin{theorem}\label{th:disk}
	Let $\mathcal{A}$ be an arbitrary randomized online algorithm. For every $n\in \mathbb{N}$ and $\rho\in \mathbb{R}$ with $\min\{n,\rho\}\geq 25$, 
	there exists a $n$-vertex disk graph $G$ with chromatic number $ \chi(G)=2 $, presented by an oblivious adversary, in which the ratio of the largest to smallest 
	disk radius is $\rho$, such that the expected number of colors used by $\mathcal{A}$ is $\Omega(\min\{\log n, \log \rho\})$.
\end{theorem}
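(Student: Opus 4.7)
My plan is to run the same machinery as in the proof of Theorem~\ref{th:rand} on the $d=2$ instance of Lemma~\ref{lem:rand}, where the graphs $G_k\in\mathcal{G}_k$ are trees with $\chi(G_k)=2$ and at most $2\cdot 12^k$ vertices, and to add one geometric step: I would show that every such $G_k$ admits a disk realization whose largest-to-smallest radius ratio is bounded by $C^k$ for some absolute constant $C$. Granted this, the proof of Theorem~\ref{th:rand} transfers almost verbatim once $k$ is chosen as $\lfloor\min\{\log_{12}(n/2),\log_{C}\rho\}\rfloor$. To reach exactly $n$ vertices I pad each sampled $G_k$ with unit-radius pendant disks attached to existing ones, and to force the ratio to be exactly $\rho$ I finally scale a single disk by a bounded factor, which only affects the lower bound by an absorbed constant. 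Yao's principle then turns the resulting deterministic $\Omega(k)$ lower bound into an $\Omega(\min\{\log n,\log\rho\})$ lower bound for any randomized algorithm that does not use the disk representation.

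The heart of the argument is the recursive disk embedding of $G_k$. I would induct on $k$, maintaining the invariant that $G_k$ is drawn inside an axis-aligned rectangle $B_k$ with the following structure: every root disk has its center on the bottom edge of $B_k$, every non-root disk lies strictly above that edge by at least a margin $\mu_k>0$, and $B_k$ is horizontally compact enough to be one of the six sub-strips used at the next level. The base case $G_1$ is two slightly overlapping unit disks, the root centered on the bottom edge and the non-root just above. In the inductive step, $G^{i,l}_{k-1}$ and $G^{i,r}_{k-1}$ are embedded side by side in horizontally disjoint sub-strips. If $b_i=1$, I add a disk representing $R_i$ below the $G^{i,l}_{k-1}$-strip: its center sits slightly below the axis, and its radius is chosen just large enough to touch every root disk of $G^{i,l}_{k-1}$, whose centers all lie on that axis. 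Translating the configuration so the new roots of $G^i_k$ land on a common horizontal line restores the invariant for $k$.

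The hardest part will be bookkeeping the radius ratio so that it grows by only a constant factor per level. Each step widens the bounding box by a factor of at most $12$ and introduces a disk $R_i$ of radius $\Theta(w_{k-1})$, while the smallest disk radius stays fixed at $1$; the recurrence $r^{\max}_k\le 12\,r^{\max}_{k-1}$ then yields $\rho_k\le C^k$ for some absolute constant $C$. I also have to verify that $R_i$ meets exactly the roots of $G^{i,l}_{k-1}$ and no other disk, which reduces to choosing the depth of $R_i$'s center below the axis smaller than $\mu_{k-1}/2$ and redefining $\mu_k$ accordingly; horizontal separation of the six $G^i_k$ sub-strips rules out cross-strip intersections, and the final pendant vertex used to link components, as in Lemma~\ref{lem:rand}, disturbs neither chordality nor the radius ratio.
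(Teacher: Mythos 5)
Your reduction from Theorem~\ref{th:rand} via Yao's principle and padding is the right frame, but the geometric core --- the claim that every $G_k\in\mathcal{G}_k$ (for $d=2$) admits a disk realization with radius ratio at most $C^k$, built bottom-up with all current root disks centered on a common horizontal line and each new clique disk $R_i$ added from below with center at depth less than $\mu_{k-1}/2$ and radius ``just large enough'' to reach every root of $G^{i,l}_{k-1}$ --- has a genuine gap. The roots of $G^{i,l}_{k-1}$ are spread over a horizontal extent of order $w_{k-1}$ and many of them are unit disks, so a disk whose center lies only slightly below the root line needs radius about $w_{k-1}/2$ to reach the outermost roots; but then its topmost point lies about $w_{k-1}/2$ \emph{above} the root line, so near the middle of the strip it engulfs non-root disks sitting just above their roots (e.g.\ the partner vertex of a $G_1$ copy), creating edges not present in $G_k$ and in fact triangles, which destroys both $\chi=2$ and the identification with the distribution of Lemma~\ref{lem:rand}. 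The only way for a single disk to intersect unit disks spread over width $W$ on a line while protruding above that line by less than the clearance $\mu$ is to put its center at depth $h\gtrsim W^2/(8\mu)$ with radius $\approx h$; feeding this into your recursion makes the widths satisfy roughly $W_{j+1}=\Theta(W_j^2)$, i.e.\ the maximal radius grows doubly exponentially in $k$, the achievable ratio is $2^{\Theta(2^k)}$, and your choice of $k$ would only yield $\Omega(\min\{\log n,\log\log\rho\})$ rather than the claimed bound. Relatedly, the invariant ``every non-root disk lies at least $\mu_k$ above the bottom edge'' cannot survive your re-alignment step: the just-demoted roots of $G^{i,l}_{k-1}$ intersect $R_i$ and extend almost a full unit below the new bottom line.

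The paper avoids exactly this by building the embedding top-down (Lemma~\ref{lem:disk}): an extra disk $d^*$ of radius $\rho$, not corresponding to any vertex of $G_k$, is placed first, and each subgraph is hung recursively below its parent disk, so the roots of a subgraph are arranged along the parent's circular boundary rather than on a straight line; a parent of radius only half the strip width then reaches all of them, the radii shrink by a factor $12$ per level down to $1$, and the ratio is exactly $\rho$ under the mild condition $\rho>12^{k-1}$, matching $k=\lfloor\log_{12}(\min\{n,\rho\}/2)\rfloor$. If you wish to keep a bottom-up viewpoint you would have to know the curvature of the future parent when placing the roots, which is effectively the top-down construction. A further, minor point: forcing the ratio to equal $\rho$ by ``scaling a single disk by a bounded factor'' is not available when $\rho\gg C^k$ (e.g.\ when $n$ is the smaller parameter); instead one adds a separate, far-away disk of the appropriate radius, which is in effect what the paper does with $d^*$ and its padding disks.
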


\begin{figure}[ht]
	\begin{minipage}[t]{8.2cm}
		\begin{center}
			\includegraphics[width=0.50\textwidth]{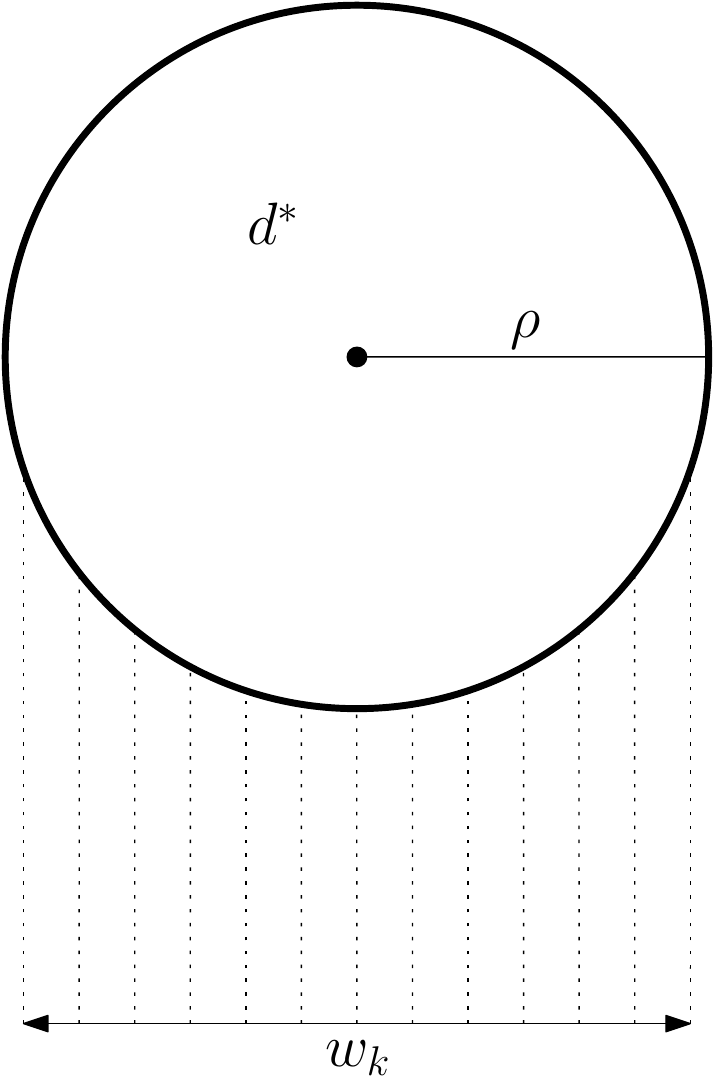}
		\end{center}
		\captionsetup{justification=centering}
		\caption{The disk graph $D_k$}\label{fig:disk1}
	\end{minipage}
	\begin{minipage}[t]{8.2cm}
		\begin{center}
			\includegraphics[width=0.3\textwidth]{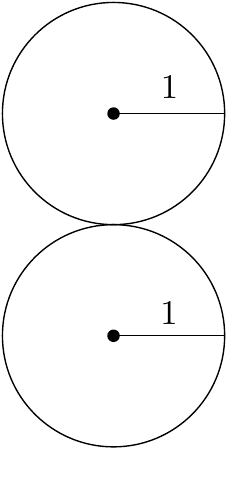}
		\end{center}
		\captionsetup{justification=centering}
		\caption{The disk graph $D_1$}\label{fig:disk2}
	\end{minipage}
\end{figure}

The proof of Theorem \ref{th:disk} relies on the following lemma, which we prove first.
\begin{lemma}\label{lem:disk}
	For every $k\in \mathbb{N}$ and every $\rho\in \mathbb{R}$ with $\rho> 12^{k-1}$, there exists a probability
	distribution on a set ${\cal D}_k$ of disk graphs with the following properties. In every $D_k\in {\cal D}_k$,
	the number of vertices is at most $2\cdot 12^k$, the ratio of the largest to smallest disk radius is $\rho$ and $ \chi(D_k)=2 $. 
	The expected number of colors used by every deterministic online algorithm for a graph drawn according to the
	distribution is at least $k/8$.
\end{lemma}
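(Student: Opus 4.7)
The plan is to realize each graph from $\mathcal{G}_k$ of Lemma~\ref{lem:rand}, specialized to $d=2$, as a disk graph with a controlled ratio of largest to smallest radius, and then to inherit the coloring lower bound directly from that lemma. For $d=2$ every tree node in the construction is a single vertex, so every $G_k \in \mathcal{G}_k$ is a forest containing at least one edge; hence $\chi = 2$, and the bound $n_k \leq 2\cdot 12^k$ is immediate from Lemma~\ref{lem:rand}.

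I would define the distribution on $\mathcal{D}_k$ recursively, in lockstep with the recursive definition of $\mathcal{G}_k$. In the base case, $D_1$ consists of two intersecting unit-radius disks representing the edge $G_1$ (as in Figure~\ref{fig:disk2}). For the inductive step, given $\mathcal{D}_{k-1}$, I would choose a 12-tuple of graphs from $\mathcal{D}_{k-1}$ uniformly at random (with replacement), rescale each representation by a factor of $1/12$, and place the twelve scaled copies in widely separated disjoint regions of the plane so that no two disks from different copies can ever intersect. Then for each of the six consecutive pairs, independently with probability $1/2$, I would add a single connecting disk $R_i$ placed in the gap between the two copies of the pair, sized and positioned so that it intersects exactly the root disk of the left copy $D_{k-1}^{i,l}$ and no other disk. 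The resulting intersection graph is precisely a graph drawn from the distribution on $\mathcal{G}_k$ with $d=2$, and the adversary can reveal the vertices in the same order used in Lemma~\ref{lem:rand}.

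The key verifications are: (i) the intersection graph of the disks matches the graph produced by the corresponding choices in Lemma~\ref{lem:rand}; (ii) the recursive vertex count $|D_k| \le 12 |D_{k-1}| + 6$ gives $|D_k| \le 2\cdot 12^k$; (iii) since each recursive level scales sub-representations by $1/12$ while the connecting disks $R_i$ are added at the current scale, the ratio of the largest to the smallest radius after $k$ levels is at most $12^{k-1}$, and by the hypothesis $\rho > 12^{k-1}$ one can slightly enlarge the outermost disks (or shrink the innermost ones) to achieve ratio exactly $\rho$ without changing the intersection pattern; (iv) then Lemma~\ref{lem:rand} yields that every deterministic online algorithm uses an expected $(d-1)k/8 = k/8$ colors.

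The main obstacle is step~(i), i.e.\ the geometric realization of the $R_i$. The connecting disk must intersect the designated root disk of the left sub-representation while avoiding every other disk --- in particular every disk of the right sub-representation, every non-root disk of the left sub-representation, and every disk in the other eleven copies. This is feasible because the twelve copies are placed far apart and $R_i$ can be chosen small relative to this separation, but it requires a careful inductive invariant guaranteeing that the root disk of any $D_{k-1}$ is ``exposed'' on the boundary of the bounding region of that representation, so that a suitable $R_i$ always fits in the adjacent pocket. Once this geometric invariant is maintained throughout the recursion, the lemma follows.
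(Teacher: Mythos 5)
Your high-level plan --- realize the distribution on $\mathcal{G}_k$ from Lemma~\ref{lem:rand} for $d=2$ as disk graphs, present the vertices in the same order, and inherit the $(d-1)k/8=k/8$ bound --- is indeed how the lemma is proved, and your items (ii) and (iv) are fine. The gap is in your step (i), and it is not a technicality: it is the whole content of the lemma. For $d=2$ the vertex $R_i$ is adjacent to \emph{every} vertex of $r(G_{k-1}^{i,l})$, and $G_{k-1}^{i,l}$ is a forest with many connected components (their number grows like $12^{k-2}$), one root vertex per component. So the new disk for $R_i$ must simultaneously intersect a large number of root disks scattered throughout the left sub-representation --- including roots of right sub-subgraphs buried deep inside it and $R$-disks that your own recursion placed ``in gaps'' at lower levels --- while avoiding all non-root disks. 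A ``single connecting disk placed in the gap between the two copies'' that ``intersects exactly the root disk of the left copy'' (singular) cannot achieve this, and the invariant you would need is much stronger than one root disk being exposed: \emph{all} root disks of \emph{all} components must be reachable by one new disk at once, and this must survive the rescaling and the far-apart placement of the twelve copies. You flag the obstacle but misstate its scope and do not resolve it; resolving it is precisely the work the proof has to do.

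The paper's construction handles this top-down rather than bottom-up. A representation of a graph $G_j$ is confined to a vertical strip of width $w_j\approx 2\cdot 12^{j-1}$, subdivided into twelve substrips of width about $w_j/12$ hosting the twelve subgraphs, with the invariant that every root disk of every component of $G_j$ touches, from below, the single large parent disk sitting above the strip. When $R_i$ is added, its disk has radius $w_{j-1}/2$, i.e.\ it spans the entire substrip of $G_{j-1}^{i,l}$, so contact with all of that subgraph's root disks (and nothing else) is automatic; when $R_i$ is not added, the left sub-representation is placed directly under the parent disk so its roots remain exposed for higher levels. The tangencies are then perturbed into genuine intersections, and the exact ratio $\rho$ is obtained by one extra dummy disk $d^*$ of radius $\rho$ at the top, which is also why the hypothesis $\rho>12^{k-1}$ and the vertex bound $n_k+1\le 2\cdot 12^k$ appear. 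If you insist on a bottom-up formulation, you must build in an analogous invariant --- for instance, all root disks of a representation tangent to a common horizontal boundary of its region, with region widths shrinking by a factor of $12$ per level --- at which point you have essentially reconstructed the paper's strip construction.
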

\begin{proof} 
	We use the graph sets ${\cal G}_k$, constructed in Lemma~\ref{lem:rand}, focusing on $d=2$. Let $k\in \mathbb{N}$
	be arbitrary. We show that if $\rho > 12^{k-1}$, every $G_k\in {\cal G}_k$ with $n_k$ vertices translates to a disk 
	graph $D_k$ with $n_k+1$ vertices in which the ratio of the largest to smallest disk radius is $\rho$. 
	
	Again $d=2$. Assume that $\rho>12^{k-1}$. Let $G_k\in {\cal G}_k$ be an arbitrary graph. The corresponding disk 
	graph is constructed in a top-down manner. In a first step we generate a graph $D_k$ in which disks touch each other 
	but do not intersect. At the end of the construction we slightly increase the disk radii so as to create intersections
	among the disks. 
	
	First in the construction of $D_k$ a disk $d^*$ of radius $\rho$, centered at the origin $(0,0)$, is placed in the
	Euclidean plane. Determine an $\epsilon>0$ such that $\rho > 12^{k-1}+\epsilon(12^{k-1}-1)12/11$. Such an $\epsilon$
	exists because $\rho > 12^{k-1}$. Disk $d^*$ has diameter $2\rho$. Let $S_k$ be the vertical strip of width
	$w_k= 2(12^{k-1}+\epsilon(12^{k-1}-1)12/11)$ below $d^*$. The center line of $S_k$ has $x$-coordinate~0. Figure~\ref{fig:disk1}
	depicts the arrangement. Since $2\rho>w_k$, disk $d^*$ extends past both boundaries of $S_k$. 
	In $S_k$ a disk representation of $G_k$ will be placed recursively below $d^*$. Recall that $G_k$ consists of twelve graphs 
	of ${\cal G}_{k-1}$ that form six pairs $(G_{k-1}^{i,l},G_{k-1}^{i,r})$, $1\leq i\leq 6$. To each pair a clique $R_i$ of 
	size $d/2$ might have been added. Since $d=2$, this clique is a single vertex.
	
	The disk representation of $G_k$ in $S_k$ is as follows. Strip $S_k$ of width $w_k$ is divided into twelve substrips of width 
	$w_k/12 = 2(12^{k-2}+\epsilon(12^{k-1}-1)/11)$ each, see again Figure~\ref{fig:disk1}. In the $m$-th substrip a strip 
	$S_{k-1}^m$ of width $w_{k-1}= w_k/12-2\epsilon = 2(12^{k-2}+\epsilon(12^{k-2}-1)12/11)$ is located, $1\leq m \leq 12$. 
	Strip $S_{k-1}^m$ lies in the middle of the $m$-th substrip so that its left and right boundaries have a distance
	of $\epsilon$ from those of the substrip. The strips $S_{k-1}^m$, $1\leq m \leq 12$, will host the representations
	of the twelve subgraphs of $G_{k}$. 
	
	Consider any pair $(G_{k-1}^{i,l},G_{k-1}^{i,r})$, $1\leq i\leq 6$. If $R_i$ is added to the pair, a disk $d_i$ of
	radius $r_{k-1}=w_{k-1}/2$ is placed in the odd numbered strip $S_{k-1}^{2i-1}$ below $d^*$. Disk $d_i$ is
	positioned so that it touches $d^*$. Then a representation of $G_{k-1}^{i,l}$ is placed in $S_{k-1}^{2i-1}$
	below $d_i$. Since $2r_k=w_k$, disk $d_i$ fully covers $S_{k-1}^{2i-1}$. If $R_i$ is not added to 
	$(G_{k-1}^{i,l},G_{k-1}^{i,r})$, a representation of $G_{k-1}^{i,l}$ is placed in $S_{k-1}^{2i-1}$ below 
	disk $d^*$. In any case a representation of $G_{k-1}^{i,r}$ is created in the neighboring strip $S_{k-1}^{2i}$
	below $d^*$. Since the left and right boundaries of $S_{k-1}^m$ have a distance of $\epsilon$ from those of 
	the $m$-th substrip containing $S_{k-1}^m$, disks and graph representations placed in $S_{k-1}^m$ do not 
	touch or overlap with disks placed in other strips $S_{k-1}^h$, $h\neq m$. 
	
	In general assume that a graph $G_j \in {\cal G}_j$, $k>j>2$, has to be represented in a strip $S_j$ of 
	width $w_j= 2(12^{j-1}+\epsilon(12^{j-1}-1)12/11)$ below a disk $d$. The construction proceeds in the same way as 
	described in the last paragraph for $j = k$. Strip $S_j$ is divided into twelve substrips, which in turn contain 
	strips of width $w_{j-1}= 2(12^{j-2}+\epsilon(12^{j-2}-1)12/11)$. These strips host the subgraphs of ${\cal G}_j$. 
	For each pair of subgraphs for which a new vertex is added, a disk of radius $r_{j-1}=w_{j-1}/2$ is placed so 
	that it touches disk $d$ from below. The top-down construction ends when graphs $G_1$ have to be placed in a 
	strip of width $w_1 = 2$ below a disk $d$. Graph $G_1$ is represented by a combination of two disks of 
	radius~1, see Figure~\ref{fig:disk2}. The two disks are placed on top of each other so that the upper one 
	touches $d$ from below. 
	
	Graph $D_k$ is constructed in a top-down manner. However, when presented to an online coloring algorithm, the 
	vertices are of course revealed bottom-up, with the vertices of graphs representing $G_1$ revealed first. 
	Disk $d^*$ in $D_k$ does not correspond to a vertex in $G_k$. For every other vertex of $G_k$, exactly one disk 
	was introduced. Hence, the analysis in the proof of Lemma~\ref{lem:rand} implies that $D_k$ contain at most  
	$1+2(12^k-1)\leq 2\cdot 12^k$ vertices. In $D_k$ the contact points among disks correspond to edges in $G_k$. 
	More precisely, any two disks $d$ and $d'$ touch each other in $D_k$ if and only if the vertices corresponding 
	to $d$ and $d'$ are connected by an edge in $G_k$. It is easy to modify $D_k$ so that the contact points are 
	replaced by real intersections among the respective disks. Let $\delta$ be the minimum distance of any disks 
	that do not touch each other. The radius of disk $d^*$ is increased by $\delta/2$. For every other disk the radius 
	is increased by $\delta/(2\rho)$.
	
	Let $D_k$ be the set of all disk graphs generated for any $G_k\in {\cal G}_k$. Consider the uniform distribution 
	on ${\cal D}_k$. By Lemma~\ref{lem:rand}, if a graph is drawn uniformly at random from ${\cal D}_k$, the expected number of 
	colors used by any deterministic online algorithm is at least $k/8$.
\end{proof}

\begin{proof}[Proof of Theorem~\ref{th:disk}]
	Let $k=\lfloor \log(\min\{n,\rho\}/2)\rfloor$. Logarithms are base~12. Since $\min\{n,\rho\}\geq 25$, there
	holds $k\in \mathbb{N}$. Moreover, $\rho> 12^{k-1}$. Consider the set ${\cal D}_k$ of disk graphs, defined in
	Lemma~\ref{lem:disk}, each of which consists of at most $2\cdot12^k$ disks. Hence by the choice of $k$, they
	consist of at most $n$ disks. To each $D_k \in {\cal D}_k$ with, say, $n_k$ disks we add $n-n_k$ additional
	disks. Their radius may be an arbitrary value between the smallest and the largest disk radius occurring in
	$D_k$. Lemma~\ref{lem:disk}, together with Yao's principle~\cite{Y}, implies that for every randomized 
	online algorithm there exists an $n$-vertex disk graph in which the ratio of the largest to smallest disk
	radius is $\rho$ such that the expected number of colors used by $\mathcal{A}$ is at least $k/8$. There holds
	$k\geq \log (\min\{n,\rho\}/2)-1 = \log (\min\{n,\rho\})-\log(24)\geq 1/100 \cdot \log (\min\{n,\rho\})$
	because $\min\{n,\rho\} \geq 25$. We conclude that $k/8\in \Omega(\min\{\log n, \log \rho\})$.
\end{proof}

\section{Lookahead and buffer reordering}\label{sec:look}
We explore the settings where an online algorithm has lookahead or is equipped with a reordering buffer.
\subsection{Lookahead} 
We first assume that a randomized online coloring algorithm $\mathcal{A}$ has lookahead~$l$. Theorem~\ref{th:look} below shows 
that, for chordal graphs, a lookahead of size $O(n/\log n)$ leads to no improvement.

\begin{theorem}\label{th:look}
	Let $d \in \mathbb{N}$ and $c\in\mathbb{R}$ be arbitrary numbers with $d\geq 2$ and $c\geq 1$. For every randomized online algorithm 
	$\mathcal{A}$ with lookahead $l$ and every $n \in \mathbb{N}$ with $n\geq \max\{12d^2, d\cdot 12^{2c}\}$ and $l\leq c n/\log(n/d)$,
	there exists a $n$-vertex chordal graph $G$ with chromatic number $\chi(G)=d$, presented by an oblivious adversary, 
	such that the expected number of colors used by $\mathcal{A}$ to color $G$ is $\Omega(\frac{1}{c} \cdot d\cdot \log n)$.
\end{theorem}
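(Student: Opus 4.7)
Following the template of Theorem~\ref{th:rand}, we invoke Yao's principle: it suffices to construct a probability distribution on $n$-vertex chordal graphs of chromatic number $d$ such that every deterministic online algorithm with lookahead $l$ uses at least $\Omega(dk^*/c)$ colors in expectation, where $k^*:=\log(n/d)$. The plan is to reuse the random family $\mathcal{G}_k$ of Lemma~\ref{lem:rand} with a reduced recursion depth $k=\lfloor k^*/(\alpha c)\rfloor$ for a suitable absolute constant $\alpha$. A graph drawn uniformly from $\mathcal{G}_k$ has at most $d\cdot 12^k$ vertices, and we pad up to exactly $n$ vertices using the device from Theorem~\ref{th:rand} (pendant vertices connected through a single connector), preserving chordality and chromatic number $d$. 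Lemma~\ref{lem:rand} itself then certifies that a random graph from this family forces $(d-1)k/8=\Omega(dk^*/c)$ expected colors against any deterministic algorithm \emph{without} lookahead. The residual task is to show that the inductive invariant (1) of Lemma~\ref{lem:rand} continues to hold, up to constant factors, against a deterministic algorithm equipped with lookahead $l$.

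To do so, I revisit the inductive step of Lemma~\ref{lem:rand}, which crucially exploits the independence of the random bit $b_i$ (governing whether the clique $R_i$ is inserted between the subinstances $G^{i,l}_{k-1}$ and $G^{i,r}_{k-1}$) from the algorithm's coloring choices on those subinstances. With lookahead, the algorithm can learn $b_i$ as soon as its lookahead window reaches $R_i$, which is placed immediately after $G^{i,r}_{k-1}$ in the presentation. I partition each subinstance into \emph{late} vertices (those within $l$ positions of the subsequent $R_i$) and \emph{early} vertices (those strictly more than $l$ positions before $R_i$); the algorithm's coloring of the early vertices remains statistically independent of $b_i$. Using the hypotheses $l\le cn/k^*$ and $n\ge d\cdot 12^{2c}$ together with the choice $k=\lfloor k^*/(\alpha c)\rfloor$, at least a $(1-O(1/\alpha))$ fraction of the vertices in every subinstance are early, so the case analysis of Lemma~\ref{lem:rand} can be applied to the early portion of each subinstance. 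Choosing $\alpha$ sufficiently large absorbs the constant-factor weakening across all $k$ recursion levels and still yields an invariant of the form ``the root vertices of $G_k$ use at least $dk/c'$ colors with probability greater than $1/2$'' for some absolute constant $c'$.

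The main obstacle is showing that the weakened invariant propagates cleanly through all $k$ recursion levels without cumulative loss: concretely, one must verify that the root vertices of every subinstance (which are defined \emph{recursively} as a union of root sets from deeper sub-subinstances together with possibly inserted $R$-cliques) lie predominantly in the early portion of their parent subinstance. This follows from the geometric growth of the subinstance sizes at consecutive recursion levels and from the fact that at each level the ``late'' region is a contiguous suffix of size at most $l$, which is a shrinking fraction of the subinstance as we ascend the recursion. Combining these observations with a careful bookkeeping over levels yields the required $\Omega(dk^*/c)$ expected color count, and Yao's principle then completes the proof.
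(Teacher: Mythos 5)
Your overall template (Yao's principle, reuse of $\mathcal{G}_k$ at a reduced depth $k\approx\log(n/d)/(\alpha c)$, padding to $n$ vertices) matches the paper's, and you correctly identify the heart of the matter: the inductive step of Lemma~\ref{lem:rand} needs the algorithm's colors on the roots of $G^{i,l}_{k-1}$ and $G^{i,r}_{k-1}$ to be fixed independently of the bit $b_i$. But the device you propose to restore this independence does not work. The claim that at least a $(1-O(1/\alpha))$ fraction of the vertices of every subinstance are \emph{early} is false: the lookahead may be as large as $l\le cn/\log(n/d)$, i.e.\ nearly linear in $n$, while a subinstance from $\mathcal{G}_j$ has at most $d\cdot 12^j$ vertices. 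With your choice of $k$ even the \emph{entire} core graph has at most $d\cdot 12^k\le d\,(n/d)^{O(1/(\alpha c))}\le d^{1/3}n^{2/3}\le n^{5/6}$ vertices (using $d\le\sqrt{n/12}$), which is far below $l$ for all large $n$. So at every recursion level \emph{all} vertices of a subinstance can be ``late''; the situation does not improve as you ascend the recursion, because $l$ is fixed while each level's subinstances remain polynomially smaller than $l$. In the extreme (and admissible) case $l\ge d\cdot 12^k$, if the $n-n_k$ padding vertices are appended after the core construction, the algorithm sees the whole core graph in its very first lookahead window, precomputes an optimal $d$-coloring, and uses $d$ colors — the lower bound collapses entirely. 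Hence this is not a constant-factor loss that a large $\alpha$ can absorb; the approach as stated fails.

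What is missing is the paper's mechanism for neutralizing the lookahead. The paper changes the presentation order to bottom-up \emph{phases}: phase $j$ consists of all vertices that first appear at recursion level $j$ (the level-$j$ cliques $R$), and at the end of every phase the adversary presents $l$ dummy vertices (e.g.\ isolated, or a chain of cliques of size at most $d$). Then, while coloring phase $j$, the lookahead window never reaches phase $j+1$, so the colors of all level-$(j-1)$ roots are committed before the level-$j$ random bits are revealed, and invariant (1) of Lemma~\ref{lem:rand} holds \emph{verbatim} — no weakened invariant and no early/late bookkeeping across levels is needed. The dummies cost only $kl$ extra vertices; choosing $k=\lfloor\frac{2}{3c}\log_{12}(n/d)\rfloor$ one checks $d\cdot 12^k\le n/3$ and $kl\le 2n/3$, so the instance fits into $n$ vertices, and Yao's principle gives the claimed $\Omega(\frac{1}{c}\,d\log n)$ bound. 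If you want to salvage your write-up, replace the early/late analysis by this phase-wise presentation with $l$ dummy vertices per phase boundary (your pendant padding vertices can serve as the dummies), and keep the rest of your argument.
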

\begin{proof}
	For any $l \in \mathbb{N}$, consider the class of deterministic online algorithms with lookahead $l$. We refine the graph sets 
	${\cal G}_k$, defined in the proof of Lemma~\ref{lem:rand}, by specifying an order in which vertices arrive and by extending the 
	individual graphs. Let $k \in \mathbb{N}$ be arbitrary and $G_k \in {\cal G}_k$ be an any graph. The vertices of $G_k$ are 
	presented to an online coloring algorithm in phases, based on the subgraphs $G_j \in {\cal G}_j$ with $j<k$ contained in $G_k$. 
	
	More precisely, at the bottom level $G_k$ contains several instances of $G_1$. The vertices of all the copies of $G_1$
	form a set $P_1$. They are presented first and are part of phase~1. Next $G_k$ contains graphs $G_2\in {\cal G}_2$. Let $P_2$
	be the set of vertices in all instances of $G_2\in {\cal G}_2$ that are not yet contained in $P_1$. Using the notation of the
	proof of Lemma~\ref{lem:rand}, the vertices of $P_2$ belong to sets $R_i$ that are added to graph pairs of $G_1$. In general,
	let $P_j$ be the set of vertices in instances of $G_j \in {\cal G}_j$ that are not yet contained in $P_1\cup \ldots \cup P_{j-1}$,
	$1 <j\leq k$. Graph $G_k$ is presented to an online algorithm $\mathcal{A}$ by revealing the vertices of $P_j$, for increasing
	$j=1,\ldots, k$. The vertex sequence of $P_j$ forms phase~$j$, $1\leq j \leq k$.
	
	In order to render $\mathcal{A}$'s lookahead useless, at the end of each phase~$j$, exactly $l$ new dummy vertices are presented, 
	$1\leq j \leq k$. These new vertices can for example be isolated vertices. Alternatively, they could be combined to form a 
	chain of cliques having size at most $d$.  
	The new vertices increase the graph size by no more than $kl$. When coloring the 
	vertices of $P_j$, an online algorithm with lookahead~$l$ has no information about vertices of $P_{j'}$, $j'>j$.  
	Let ${\cal G}_k$ be the set of all extended graphs. Lemma~\ref{lem:rand} implies that if a graph is drawn 
	uniformly at  random from ${\cal G}_k$, the expected number of colors used by any deterministic online algorithm with 
	lookahead is at least $(d-1)k/8$.
	
	We conclude that for any $k,l \in \mathbb{N}$, there exists a probability distribution on a set $G_k$ of chordal graphs
	with the following properties. For every $G_k\in {\cal G}_k$, $\chi(G_k)=d$ and the number of vertices satisfies 
	$n_k \leq d\cdot 12^k +kl$. The expected number of colors used by any deterministic online algorithm with lookahead $l$ 
	is at least $(d-1)k/8$.
	
	In order to prove the theorem, for $d$, $c$ and $n$ with the stated properties, choose 
	$k=\lfloor \frac{2}{3c} \cdot \log (\frac{n}{d})\rfloor$. Logarithms are base~12.  There holds $k\in \mathbb{N}$ because 
	$n\geq d\cdot 12^{2c}$. Consider the set ${\cal G}_k$ of extended graphs defined above. Each graph in ${\cal G}_k$ has at 
	most $d\cdot 12^k +kl$ vertices. We argue that, for $l\leq c n/\log(n/d)$, this  expression is upper bounded by $n$. For 
	the chosen $k$, we have  $d\cdot 12^k \leq d^{1/3} n^{2/3} \leq n/3$. The first inequality holds because $c\geq 1$. 
	The second inequality is equivalent to $27\leq n/d$, which holds for $n\geq d\cdot 12^{2c}$. Obviously, if $l\leq c n/\log(n/d)$, 
	then $kl\leq 2n/3$. Hence, as desired,  $d\cdot 12^k +kl\leq n$. To each graph of ${\cal G}_k$ we add a suitable number of 
	vertices so that the graph size is exactly $n$.
	
	Using Yao's principle we obtain that, for every randomized online algorithm $\mathcal{A}$ with lookahead $l$, where $l \leq c n/\log(n/d)$, 
	there exists an $n$-vertex chordal graph $G$ with $\chi(G)=d$ such that the expected number of colors used by $\mathcal{A}$ is at least 
	$c_k\geq (d-1)k/8$, which in turn is lower bounded by $dk/16$. There holds $k\geq 2/(3c) \cdot \log (n/d) -1 = 
	2/(3c) \cdot (\log(n/d)) - (3c/2))$. Moreover, $\log(n/d)-(3c/2) \geq (1/4)\cdot \log(n/d)$, for $n\geq d\cdot 12^{2c}$. Also,  
	$\log (n/d) \geq {1/2}\cdot \log n$, for $n\geq d^2$. In conclusion, $k \geq 1/(12c) \cdot \log(n)$ and therefore $ c_k \in \Omega(\frac{1}{c} \cdot d \cdot \log(n)) $.
\end{proof}

Based on Theorem~\ref{th:look} we can derive analogous results for all the other graph classes considered in 
Section~\ref{sec:classes}. Loosely speaking, a lookahead of size $O(n/\log n)$ is of no help. The next 
Corollary~\ref{cor:tree2} addresses trees. Exactly the same statement holds for planar and bipartite graphs, 
respectively. For brevity, we omit the corresponding corollaries.
\begin{corollary}\label{cor:tree2}
	Let $c\geq 1$ be an arbitrary real number. For every randomized online algorithm 
	$\mathcal{A}$ with lookahead $l$ and every $n \in \mathbb{N}$ with $n\geq \max\{48, 2\cdot 12^{2c}\}$ and $l\leq c n/\log(n/2)$,
	there exists a $n$-vertex tree $G$, presented by an oblivious adversary, such that the expected number of colors 
	used by $\mathcal{A}$ to color $G$ is $\Omega(\frac{1}{c} \cdot \log n)$. 
\end{corollary}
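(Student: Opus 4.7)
The plan is to obtain Corollary~\ref{cor:tree2} by specializing Theorem~\ref{th:look} to the case $d=2$, while making sure the extended graph construction used in that theorem keeps the instance a tree. Recall from Corollary~\ref{cor:tree} that for $d=2$ every graph $G_k\in\mathcal{G}_k$ produced in Lemma~\ref{lem:rand} is a forest: each tree node in the recursive construction is a clique of size $d/2=1$, i.e.\ a single vertex, so the ``cliques'' $R_i$ added in the recursive step are isolated vertices attached to their parent subgraph in the expected way, and the components can afterwards be stitched together by a single extra vertex as in the proof of Lemma~\ref{lem:det}.

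Next I would revisit the proof of Theorem~\ref{th:look}. That proof refines $\mathcal{G}_k$ by cutting the input into phases $P_1,\ldots,P_k$ and appending $l$ dummy vertices at the end of each phase in order to neutralize the lookahead. The theorem leaves two options for the dummy vertices; for our purposes I would insist on the first option, namely that the dummy vertices are isolated. Under this choice the extended graph $G_k$ remains a forest when $d=2$, because we have only added isolated vertices to a forest. As in Corollary~\ref{cor:tree}, the resulting components can then be linked by one additional vertex into a single tree without altering either the chromatic number or the asymptotics of the size or the required colors.

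With this tree-preserving variant in hand, I would simply run through the parameter bookkeeping of Theorem~\ref{th:look} with $d=2$. The hypotheses $n\geq\max\{12d^2,\,d\cdot12^{2c}\}$ and $l\leq cn/\log(n/d)$ specialize to $n\geq\max\{48,\,2\cdot12^{2c}\}$ and $l\leq cn/\log(n/2)$, matching the statement of the corollary. Choosing $k=\lfloor\tfrac{2}{3c}\log(n/2)\rfloor$ (base~12) exactly as in the theorem, the same calculation shows that $2\cdot12^k+kl\leq n$, so after padding with isolated vertices attached to the final linking vertex the instance has exactly $n$ vertices and is still a tree. Applying Yao's principle to the distribution on the resulting $n$-vertex trees yields, for every randomized online algorithm $\mathcal{A}$ with lookahead $l$, a tree on which $\mathcal{A}$ uses in expectation at least $(d-1)k/8=k/8\in\Omega(\tfrac{1}{c}\log n)$ colors.

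The only real point requiring care — and the place where one could slip up — is the tree-preservation of the dummy phase-separators; once that is fixed by taking them isolated, the remainder is a direct transcription of the argument of Theorem~\ref{th:look} with $d=2$, exactly in parallel to how Corollary~\ref{cor:tree} is derived from Theorem~\ref{th:rand}.
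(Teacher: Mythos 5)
Your proposal is correct and follows essentially the same route the paper intends: specialize Theorem~\ref{th:look} to $d=2$, note (as in Corollary~\ref{cor:tree}) that the constructed graphs are forests whose components can be linked into a tree, and take the phase-separating dummy vertices to be isolated — an option the paper explicitly allows — so tree-ness is preserved before applying Yao's principle. The parameter bookkeeping with $k=\lfloor\tfrac{2}{3c}\log_{12}(n/2)\rfloor$ matches the theorem's proof, so the derivation is sound.
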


For $d$-inductive graphs, graphs of treewidth $d$  and strongly chordal graphs with chromatic number $d$, the formulation of Theorem~\ref{th:look} directly carries over.
In fact, the result holds for all integers $d\geq 1$. 
For disk graphs, Theorems~\ref{th:disk} and \ref{th:look} give the following corollary. 
\begin{corollary}
	Let $c\in\mathbb{R}$ with $c\geq 1$ be arbitrary. For every randomized online algorithm $\mathcal{A}$ with lookahead $l$, every 
	$n \in \mathbb{N}$ and $\rho \in\mathbb{R}$ with $\min\{n,\rho \} \geq 2\cdot 12^{2c}$ and $l\leq c n/\log(n/2)$,
	there exists a $n$-vertex disk graph $G$ with chromatic number $ \chi(G)=2 $, presented by an oblivious adversary, in which the ratio of the largest to
	smallest disk radius is $\rho$, such that the expected number of colors used by $\mathcal{A}$ to color $G$ is 
	$\Omega(\frac{1}{c} \cdot \log n)$. 
\end{corollary}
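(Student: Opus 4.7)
The plan is to combine the disk-graph construction of Lemma~\ref{lem:disk} with the phase-based dummy-vertex trick used in the proof of Theorem~\ref{th:look}. Fix $d=2$ throughout. Starting from the probability distribution on $\mathcal{G}_k$ from Lemma~\ref{lem:rand} (with $d=2$), I first translate each $G_k \in \mathcal{G}_k$ into a disk graph $D_k$ via the top-down recursive disk placement of Lemma~\ref{lem:disk}, so that a random draw from the resulting set $\mathcal{D}_k$ forces any deterministic online algorithm to use at least $k/8$ colors in expectation, with largest-to-smallest radius ratio $\rho$ whenever $\rho > 12^{k-1}$.

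Next I would layer on the lookahead-defeating phases exactly as in Theorem~\ref{th:look}. The vertices of $D_k$ (equivalently, the non-$d^*$ disks) are revealed bottom-up in phases $P_1,\ldots,P_k$ corresponding to the recursive levels of the underlying $G_k$; at the end of each phase I insert $l$ dummy disks, which I realize as tiny isolated disks placed far away from the main construction (say, well outside strip $S_k$ and disk $d^*$), with radius chosen between the smallest and largest radii already used. Since the dummy disks are pairwise non-intersecting and disjoint from all existing disks, they form an independent set, so they neither alter $\chi(D_k)=2$ nor the radius ratio $\rho$, and they make the lookahead of size $l$ useless when the algorithm must color the vertices of $P_j$. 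The total vertex count is thus at most $2\cdot 12^k + kl + 1$.

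To finish I would choose $k=\lfloor \tfrac{2}{3c}\log(n/2)\rfloor$ (logarithms base~$12$), which is a positive integer because $n\geq 2\cdot 12^{2c}$. The same arithmetic as in the proof of Theorem~\ref{th:look} (with $d=2$) gives $2\cdot 12^k \leq n/3$ and $kl\leq 2n/3$ whenever $l\leq cn/\log(n/2)$, so the total size fits in $n$; any remaining vertices are padded in as additional isolated dummy disks of admissible radius. The hypothesis $\min\{n,\rho\}\geq 2\cdot 12^{2c}$ ensures $\rho > 12^{k-1}$, so Lemma~\ref{lem:disk} applies. Finally, applying Yao's principle to the uniform distribution on the extended set $\mathcal{D}_k$ yields the $\Omega(k)=\Omega(\tfrac{1}{c}\log n)$ lower bound in expectation for any randomized algorithm with lookahead~$l$.

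The main obstacle is the disk-geometry bookkeeping for the dummy disks: they must not accidentally intersect each other, the existing disks of $D_k$, or the padding disks, while still respecting the radius ratio $\rho$. This is handled by placing them in disjoint regions of the plane that are far from $d^*$ and from strip $S_k$, using radii within the range already present in $D_k$ so that the ratio remains exactly $\rho$. Everything else is a direct transcription of the arguments of Theorem~\ref{th:look} and Theorem~\ref{th:disk} to the $d=2$ disk-graph setting.
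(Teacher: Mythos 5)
Your overall route is exactly the one the paper intends: the corollary is meant to follow by running the lookahead argument of Theorem~\ref{th:look} on the $d=2$ distribution, realized as disk graphs via Lemma~\ref{lem:disk}, inserting $l$ far-away dummy disks at the end of each phase, and invoking Yao's principle. Your bookkeeping for the dummy and padding disks (isolated, placed outside $S_k$ and $d^*$, radii within the range already present so that the ratio stays $\rho$ and $\chi$ stays $2$, total size $2\cdot 12^k+kl+1\leq n$) is sound and matches what the paper's one-line derivation from Theorems~\ref{th:disk} and~\ref{th:look} requires.

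The one step that fails as written is the claim that the hypothesis $\min\{n,\rho\}\geq 2\cdot 12^{2c}$ ensures $\rho>12^{k-1}$ for your choice $k=\lfloor\tfrac{2}{3c}\log_{12}(n/2)\rfloor$. That hypothesis only gives $\log_{12}\rho\geq 2c+\log_{12}2$, a bound independent of $n$, whereas your $k$ grows like $\tfrac{2}{3c}\log_{12}n$; for instance with $c=1$, $\rho=2\cdot 12^{2}$ and $n=2\cdot 12^{30}$ one gets $k=20$ and $12^{k-1}\gg\rho$, so Lemma~\ref{lem:disk} is not applicable, and indeed the recursive disk placement shrinks radii by a factor of roughly $12$ per level, so its nesting depth (and hence the number of colors it can force) is capped by $O(\log\rho)$ when $\rho\ll n$. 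The repair is to choose $k=\lfloor\tfrac{2}{3c}\log_{12}(\min\{n,\rho\}/2)\rfloor$; then $\rho>12^{k-1}$ does follow from $\min\{n,\rho\}\geq 2\cdot 12^{2c}$, the size estimate $2\cdot 12^{k}+kl\leq n$ still goes through since $\min\{n,\rho\}\leq n$, and Yao's principle gives $\Omega(\tfrac{1}{c}\cdot\min\{\log n,\log\rho\})$ expected colors --- the form in which Theorem~\ref{th:disk} and Corollary~\ref{cor:disk3} are stated and in which this corollary should be read. Your argument as given establishes the literal bound $\Omega(\tfrac{1}{c}\log n)$ only in the regime $\log\rho=\Omega(\log n)$ (e.g.\ $\rho\geq n$), so you should either make that restriction explicit or state the conclusion with the minimum.
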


\subsection{Buffer reordering} 
Next we examine the setting in which a deterministic online coloring algorithm $\mathcal{A}$ has a reordering buffer. We prove
that a buffer of size $n^{1-\epsilon}$, for any $0<\epsilon \leq 1$, does not improve the asymptotic performance of 
the algorithms.

\begin{theorem}\label{th:buffer}
	Let $d \in \mathbb{N}$ and $\epsilon\in\mathbb{R}$ be arbitrary numbers with $d\geq 2$ and $0< \epsilon \leq 1$. For 
	every deterministic online algorithm $\mathcal{A}$ having a buffer of size $b$ and every $n \in \mathbb{N}$ with $b\leq n^{1-\epsilon}$
	and $n\geq \max\{2d^2,2^{7/\epsilon}\}$, there exists a $n$-vertex chordal graph $G$ with chromatic number $\chi(G)=d$
	such that the number of colors used by $\mathcal{A}$ is $\Omega(\epsilon \cdot d\cdot \log n)$. 
\end{theorem}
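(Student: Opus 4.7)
The plan is to reduce the buffered setting to the no-buffer setting of Lemma~\ref{lem:det} by \emph{thickening} the hard instance. Specifically, I take the graph $G_k$ produced by Lemma~\ref{lem:det} and replace each vertex $v \in V(G_k)$ by an independent set $I_v$ of $b+1$ twin copies, joining every two copies of adjacent originals by an edge. Substituting vertices by independent sets preserves both chordality and the clique number, so the resulting graph $G^*_k$ is chordal with $\chi(G^*_k) = \omega(G^*_k) = d$ and at most $(b+1)\,d\cdot 2^k$ vertices. I present the $b+1$ copies of each $v_i$ to the algorithm $\mathcal{A}$ consecutively, in the same order $v_1, v_2, \ldots$ in which the adversary of Lemma~\ref{lem:det} produces vertices.

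The driving observation is a pigeonhole argument: the buffer, being of size $b$, cannot contain all $b+1$ copies of any vertex, so at the instant the $v_i$-block finishes arriving at least one copy of $v_i$ already carries a committed color. Using this, I define a virtual deterministic online algorithm $\mathcal{A}'$ on the unthickened graph $G_k$: on arrival of $v_i$, I simulate $\mathcal{A}$ on the corresponding block of $b+1$ copies and let $\mathcal{A}'$ output the color of any committed copy of $v_i$. Twin copies of adjacent originals are fully joined, so committed copies of adjacent originals carry different colors, and $\mathcal{A}'$ is a valid online coloring of $G_k$ whose color palette is a subset of $\mathcal{A}$'s committed colors on $G^*_k$. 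Applying Lemma~\ref{lem:det} to $\mathcal{A}'$ produces a chordal $G_k$ with $\chi(G_k) = d$ forcing $\mathcal{A}'$, and hence $\mathcal{A}$, to use at least $(d-1)k/4$ colors.

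To finish, I would choose $k$ to be the largest integer with $(b+1)\,d\cdot 2^k \le n$, and pad $G^*_k$ up to exactly $n$ vertices by attaching pendant edges to a single vertex, which preserves chordality and the chromatic number. Combining $b\le n^{1-\epsilon}$ with the hypotheses $n \ge 2d^2$ and $n \ge 2^{7/\epsilon}$, a direct arithmetic check gives $k \ge \epsilon \log n - \log d - O(1) = \Omega(\epsilon \log n)$, so that the total number of colors used is $\Omega(\epsilon \,d\, \log n)$ as required; the exponent $7/\epsilon$ in the hypothesis on $n$ is chosen precisely to absorb the additive $\log d$ and constant terms.

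The main obstacle, and the only place where the buffer enters nontrivially, is arguing that $\mathcal{A}'$ is a legitimate deterministic online algorithm for $G_k$ rather than one with implicit lookahead; this is exactly the content of the pigeonhole on $b+1$ twins, which fixes the color of $v_i$ before any copy of $v_{i+1}$ is revealed, allowing the Lemma~\ref{lem:det} adversary to adaptively choose $v_{i+1}$ on the basis of $\mathcal{A}'$'s commitment to $v_i$. A secondary but routine point is to check that independent-set substitution preserves chordality, which follows from a short case analysis: any induced cycle of length $\ge 4$ in $G^*_k$ either projects to an induced cycle in $G_k$ whose chord lifts back, or contains two non-consecutive copies of some original whose common external neighbourhood supplies the required chord.
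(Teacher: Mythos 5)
Your reduction via a virtual algorithm $\mathcal{A}'$ (pigeonhole on $b+1$ twins, then invoke Lemma~\ref{lem:det}) is a clean idea, but the proposal has two genuine gaps. First, substituting each vertex by an independent set of twins does \emph{not} preserve chordality: already a single edge $uv$ blown up into $\{u_1,u_2\}$ and $\{v_1,v_2\}$ with the complete join yields $K_{2,2}=C_4$, an induced chordless $4$-cycle, since $u_1u_2$ and $v_1v_2$ are non-edges. Your sketched case analysis fails exactly here: the two non-consecutive copies $u_1,u_2$ have common neighbours only \emph{outside} the cycle or consecutive on it, so no chord is supplied. Consequently $G^*_k$ is not chordal (nor a tree when $d=2$), and the theorem explicitly requires a chordal graph with $\chi(G)=d$; replacing twins by cliques would restore chordality but multiplies the clique number, so the chromatic number is no longer $d$. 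Second, the arithmetic does not close. Your thickened graph has up to $(b+1)d2^k$ vertices, so $k\le\log\bigl(n/(d(b+1))\bigr)$, giving only $k\ge \epsilon\log n-\log d-O(1)$. The hypothesis $n\ge 2^{7/\epsilon}$ absorbs constants but not $\log d$: since only $n\ge 2d^2$ is assumed, $d$ may be as large as $\sqrt{n/2}$, and for $\epsilon<1/2$ the bound is vacuous (indeed $(b+1)d>n$ is possible, so not even one blown-up $G_1$ fits). So the claimed conclusion $k=\Omega(\epsilon\log n)$ is false in the permitted parameter range.

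The paper avoids both problems by never duplicating vertices. It runs the adaptive construction of Lemma~\ref{lem:det} in parallel on $2^{k-1}$ disjoint base graphs and presents the new clique vertices level by level in phases; the buffer bound $b\le n^{1-\epsilon}$ is used only to argue, by pigeonhole over the many parallel subgraphs (whose root sets have total size $\frac{d}{4}2^{k-2j}\ge n^{1-\epsilon}$ per phase), that the algorithm must have \emph{committed} colors on at least half the root vertices of a constant fraction of subgraphs, and the adversary continues the recursion only on those. This keeps the graph chordal with $\chi=d$, and the $\log d$ terms cancel in the phase count $k'\approx\frac{\epsilon}{2}\log n$, which is why the paper obtains $\Omega(\epsilon\, d\log n)$ for all $d\le\sqrt{n/2}$. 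To salvage your approach you would need a thickening that preserves chordality without inflating $\chi$, and a way to avoid the multiplicative $(b+1)$ blow-up of the instance size; as written, both steps fail.
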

\begin{proof}
	We extend the adaptive graph construction presented in the proof of Lemma~\ref{lem:det} and, as in the proof of 
	Theorem~\ref{th:look}, let the adversary generate a graph in a bottom-up fashion. Given $d$, $\epsilon$ and $n$ with the stated
	properties, let $k=\lfloor \log(n/d)\rfloor$. The adversary constructs a graph $G_k\in {\cal G}_k$ consisting of
	$2^{k-j}$ subgraphs $G_j\in {\cal G}_j$, for any $1\leq j\leq k$. In phase~1, $2^{k-1}$ graphs $G_1$ are constructed.
	As always each $G_1$ is a clique of size $d$ in which $d/2$ arbitrary vertices form a set of distinguished root vertices. 
	We assume that $d$ is even and address the case that $d$ is odd at the end of the proof. The vertices of all 
	the copies of $G_1$ may be presented in an arbitrary order to the deterministic online algorithm $\mathcal{A}$. In general 
	in phase $j$, $1<j\leq k$, the adversary presents the vertices of subgraphs $G_j \in {\cal G}_j$ that have not 
	been revealed in previous phases.
	
	More specifically, let $k'=\lfloor \frac{1}{2} (k-\log(4n^{1-\epsilon}/d))\rfloor$. There holds $k'\geq 1$: 
	The last inequality is satisfied if $\frac{1}{2} (\log(n/d)-1-\log(4n^{1-\epsilon}/d))-1\geq 1$.
	This inequality in turn is equivalent to $n\geq 2^{7/\epsilon}$, which holds true by the choice of $n$. Consider any phase 
	$j$, $1<j \leq k'$. We say that algorithm $\mathcal{A}$ \emph{has made progress on a subgraph $G_j\in {\cal G}_j$} if at the end of phase 
	$j$ the algorithm has colored at least half of the root vertices of $G_j$. We prove that at the end of every phase $j$, 
	$1\leq j\leq k'$, the following invariant~(1) holds. Invariants (2--5) are as in the proof of Lemma~\ref{lem:det}.  
	\\[-15pt]
	\begin{enumerate}[(1)]
		\item At the end of phase $j$, $\mathcal{A}$ has made progress on at least $2^{k-2j}$ subgraphs $G_j \in {\cal G}_j$. For each of these subgraphs, 
		$\left|\mathcal{C}_{A}\left(r(G_j) \right) \right| \geq \frac{d}{8}j$.
	\end{enumerate}
	For $j=1$, the analysis is simple. Suppose that at the end of phase~1 $\mathcal{A}$ has made progress on less than $2^{k-2}$ subgraphs 
	$G_1$. Then there exist more than $2^{k-1}-2^{k-2}=2^{k-2}$ graphs $G_1$ for which less than half of the root vertices have been 
	colored. Thus at the end of phase~1 the buffer must contain more than $\frac{d}{4} 2^{k-2}$ vertices. We observe that for any 
	$j$ with $1\leq j\leq k'$, there holds $\frac{d}{4} 2^{k-2j}\geq n^{1-\epsilon}$ because the latter inequality is equivalent to 
	$\frac{1}{2} (k-\log(4n^{1-\epsilon}/d))\geq j$, which is satisfied by the choice of $k'$. Since the buffer size is 
	at most $n^{1-\epsilon}$, $\mathcal{A}$ cannot store more than $\frac{d}{4} 2^{k-2}$ vertices in the buffer at the end of phase~1. Hence 
	$\mathcal{A}$ must have made progress on at least $2^{k-2}$ subgraphs $G_1$. For each of those subgraphs at least half of the root 
	vertices have been colored, i.e.\ at least $d/4>d/8$ colors have been used.
	
	Assume that invariant~(1) holds for phases $1,\ldots,j-1$, where $j\leq k'$. The adversary takes $2^{k-2(j-1)}$ subgraphs 
	$G_{j-1}\in {\cal G}_{j-1}$ for which $\mathcal{A}$ has made progress and $\left|\mathcal{C}_{\mathcal{A}}\left(r(G_{j-1}) \right) \right| \geq 
	\frac{d}{8}(j-1)$ holds. The adversary pairs them in an arbitrary way so that $2^{k-2j+1}$ graph pairs are formed. 
	Consider any such pair $(G_{j-1}^l, G_{j-1}^r)$. By inductive assumption $|\mathcal{C}_\mathcal{A}(r(G^l_{j-1}))| 
	\geq \frac{d}{8}(j-1)$ and $|\mathcal{C}_\mathcal{A}(r(G^r_{j-1}))| \geq \frac{d}{8}(j-1)$. If 
	$|\mathcal{C}_\mathcal{A}(r(G^l_{j-1}) \cup r(G^r_{j-1}))| \geq \frac{d}{8}j$, then the adversary creates a graph $G_j$ that
	is simply the union of $G_{j-1}^l$ and $G_{j-1}^r$. No further vertices are added. On the other hand, if 
	$|\mathcal{C}_\mathcal{A}(r(G^l_{j-1}) \cup r(G^r_{j-1}))| < \frac{d}{8}j$, the adversary creates a graph $G_j$
	by adding a clique $R$ of size $d/2$ to $(G_{j-1}^l, G_{j-1}^r)$. Each vertex of $R$ has an edge to every vertex of 
	$r(G_{j-1}^l)$. As in the proof of Lemma~\ref{lem:det} we can show that if $\mathcal{A}$ has colored at least half of the vertices
	of $R$, there holds $|\mathcal{C}_\mathcal{A}(r(G_j))| \geq \frac{d}{8} j$.  Phase $j$ consists of the arrival of the vertices of $R$, 
	taken over all the $2^{k-2j+1}$ graph pairs for which such a clique is added. Finally, the adversary takes the subgraphs 
	$G_{j-1}$ not combined so far and pairs them in an arbitrary way so as to create graphs $G_j$. No further vertices are added.
	
	It remains to verify that invariant (1) holds. Again, consider the $2^{k-2j+1}$ graph pairs composed of subgraphs 
	$G_{j-1}$ satisfying invariant~(1) for phase $j-1$. The adversary has constructed $2^{k-2j+1}$ corresponding subgraphs 
	$G_j$. We argue that at the end of phase $j$, $\mathcal{A}$ has made progress on at least half of them. Consider any $G_j$, 
	based on graph pair $(G_{j-1}^l, G_{j-1}^r)$. If no clique has been added, then $\mathcal{A}$ has made progress on $G_j$, 
	because by inductive assumption $\mathcal{A}$ has colored at least half of the root vertices of $G_{j-1}^l$ and $G_{j-1}^r$.
	On the other hand, if a clique $R$ had been added and $\mathcal{A}$ has not made progress on $G_j$, then more than $d/4$ vertices 
	of $R$ must reside in the buffer at the end of phase $j$. Hence, if $\mathcal{A}$ had not made progress on more than half of the 
	$2^{k-2j+1}$ considered subgraphs $G_j$, then more than $\frac{d}{4}\cdot \frac{1}{2} 2^{k-2j+1} = \frac{d}{4}2^{k-2j}$ vertices
	must reside in the buffer. This is impossible because, as verified in the second to last paragraph, $\frac{d}{4} 2^{k-2j} \geq
	n^{1-\epsilon}$. We obtain that $\mathcal{A}$ has made progress on at least $\frac{1}{2} 2^{k-2j+1}=2^{k-2j}$ subgraphs $G_j$.
	For each of these subgraphs, the potential addition of a clique $R$ ensures that $\mathcal{A}$ must use at least $\frac{d}{8} j$
	colors for the root vertices.
	
	After phase $k'$ the formation of graphs $G_j$, $k'<j\leq k$ is simple. The adversary takes arbitrary pairs of graphs 
	$G_{j-1}$ and combines them to form graphs $G_j$. No further vertices are added. Finally, when the generation of a graph 
	$G_k$ consisting of, say, $n_k$ vertices is complete, the adversary adds $n-n_k$ vertices to form a final graph $G$ 
	with $n$ vertices. Invariant~(1) for $j=k'$ ensures that $\mathcal{A}$ uses at least $\frac{d}{8} k'$ colors. 
	We show that $k'$ is in $\Omega( \epsilon \cdot \log n )$.
	There holds $k'\geq \frac{1}{2} (\log(n/d)-1-\log(4n^{1-\epsilon}/d))-1 = \frac{1}{2}(\epsilon \log n -5) \geq \frac{1}{8} \epsilon \log n$, for $n\geq 2^{7/\epsilon}$. 
	
	Finally, if $d$ is odd, the above graph construction is performed for $d-1$. A single vertex is added to each subgraph $G_1$
	to form a graph with clique size $d$.
\end{proof}
Given Theorem~\ref{th:buffer}, we derive analogous results for the other graph classes. Corollary~\ref{cor:tree3}
shows a result for trees. Identical statements hold for planar and bipartite graphs. Again, for brevity, we omit the corresponding corollaries.
\begin{corollary}\label{cor:tree3}
	Let $\epsilon\in\mathbb{R}$ with $0< \epsilon \leq 1$ be arbitrary. For every deterministic online algorithm $\mathcal{A}$ having a 
	buffer of size $b$ and every  $n \in \mathbb{N}$ with $b\leq n^{1-\epsilon}$ and $n\geq 2^{7/\epsilon}$, there exists a 
	$n$-vertex tree $G$ such that the number of colors used by $\mathcal{A}$ is $\Omega(\epsilon \cdot \log n)$. 
\end{corollary}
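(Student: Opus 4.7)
The plan is to reduce directly to Theorem~\ref{th:buffer} in the case $d=2$, mirroring the derivation of Corollary~\ref{cor:tree} from Lemma~\ref{lem:rand} and Theorem~\ref{th:rand}. First I would instantiate Theorem~\ref{th:buffer} with $d=2$ and the given $\epsilon$. This yields, for every deterministic online algorithm $\mathcal{A}$ with a buffer of size $b\leq n^{1-\epsilon}$ and every $n\geq \max\{2\cdot 2^2,\,2^{7/\epsilon}\} = \max\{8,\,2^{7/\epsilon}\}$, an $n$-vertex chordal graph $G$ with $\chi(G)=2$ on which $\mathcal{A}$ is forced to use $\Omega(\epsilon\cdot 2\cdot \log n)=\Omega(\epsilon\cdot\log n)$ colors. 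Since $0<\epsilon\leq 1$ gives $2^{7/\epsilon}\geq 2^7>8$, the single assumption $n\geq 2^{7/\epsilon}$ in the corollary implies the hypothesis of Theorem~\ref{th:buffer}.

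Next I would argue that the graph $G$ produced by the construction in the proof of Theorem~\ref{th:buffer} is in fact a tree when $d=2$. This is the same observation used in the proof of Corollary~\ref{cor:tree}: for $d=2$, the ``cliques of size $d/2$'' that the adversary introduces in each recursive step are single vertices, and a tree node is also a single vertex. The base graphs $G_1$ are cliques of size $d=2$, i.e.\ single edges. Each recursive combination either takes the disjoint union of two previously constructed forests or attaches a new singleton vertex adjacent to all root vertices of one of them; since every root set in the $d=2$ case is a single vertex, the ``clique-to-root'' bipartite connection reduces to a single new edge. Hence the construction preserves acyclicity at every step and the resulting $G_k$ is a forest. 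Finally, the extension step (adding the last $n-n_k$ vertices, each attached to an existing vertex by a single edge) also preserves the forest property, and a single additional vertex with one edge to each connected component links the forest into a tree without introducing a cycle.

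Combining these two observations gives the desired corollary: for $n\geq 2^{7/\epsilon}$ and any buffer of size $b\leq n^{1-\epsilon}$, the chordal graph produced by Theorem~\ref{th:buffer} for $d=2$ is an $n$-vertex tree on which $\mathcal{A}$ uses $\Omega(\epsilon\cdot\log n)$ colors. The only nontrivial point to verify is that the adaptive construction in the proof of Theorem~\ref{th:buffer}, including the buffer-defeating phased presentation, does not introduce any cycle when $d=2$; but this is immediate from inspecting the two adversary moves (disjoint union and attaching one new vertex to one previous root vertex), both of which trivially preserve the forest structure inherited inductively from the base graphs.
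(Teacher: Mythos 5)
Your proposal is correct and follows essentially the same route the paper intends: instantiate Theorem~\ref{th:buffer} with $d=2$ and observe, exactly as in the derivation of Corollary~\ref{cor:tree}, that the size-$d/2$ cliques become singleton vertices, so the adversary's construction stays a forest that can be linked into an $n$-vertex tree, giving the $\Omega(\epsilon\cdot\log n)$ bound. One minor imprecision: for $d=2$ the newly added vertex is joined to the root of \emph{each} connected component of $G_{j-1}^l$ (one edge per component), not by a single edge overall, but since those roots lie in pairwise disjoint trees this still preserves acyclicity, so your conclusion stands.
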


For $d$-inductive graphs, graphs of treewidth $d$ and strongly chordal graphs with chromatic number $d$, the statement of Theorem~\ref{th:buffer} directly carries over. 
In this case it holds for any $d\geq 1$. The corollaries are omitted here. Finally, we give a result for disk graphs.

\begin{corollary}\label{cor:disk3}
	Let $\mathcal{A}$ be an arbitrary deterministic online algorithm having a buffer of size $b$ and let $\epsilon\in\mathbb{R}$ be
	an arbitrary real number with $0< \epsilon \leq 1$. For every $n \in \mathbb{N}$ and $\rho\in \mathbb{R}$ with 
	$b\leq \min\{n^{1-\epsilon},\rho^{1-\epsilon}\}$ and $\min\{n,\rho\}\geq 2^{7/\epsilon}$, there exists a $n$-vertex 
	disk graph $G$ with chromatic number $ \chi(G)=2 $, in which the ratio of the largest to smallest disk radius is $\rho$, such that the number of colors 
	used by $\mathcal{A}$ is $\Omega(\epsilon \cdot \min\{\log n,\log \rho\})$.
\end{corollary}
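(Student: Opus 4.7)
The plan is to combine the adaptive buffer adversary of Theorem~\ref{th:buffer}, specialised to $d=2$, with the recursive strip-based disk embedding used in Lemma~\ref{lem:disk}. When $d=2$, every clique of size $d/2$ in the construction of Theorem~\ref{th:buffer} is a single vertex, so the graphs produced by the buffer adversary are forests with exactly the tree-of-singletons structure that Lemma~\ref{lem:disk} already realises as disks. In particular, the ``new clique $R$'' added at a pair is a single vertex that becomes the root of the combined subtree, so the abstract construction is a deterministic, adversarial special case of what Lemma~\ref{lem:disk} embeds.

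First, I would run the adversary of Theorem~\ref{th:buffer} with $d=2$ at a recursion depth $k$ to be chosen below. The only structural difference from Lemma~\ref{lem:rand} is that the buffer adversary pairs subgraphs rather than forming $12$-tuples, so in the embedding each strip is split into two substrips instead of twelve, and every newly added root vertex becomes a disk of half the parent radius that touches the parent from below. After $k$ levels of recursion the smallest disk has radius $\Theta(\rho/2^{k})$ and the total number of disks is $\Theta(2^{k})$; the radius ratio produced by the construction is therefore $\Theta(2^{k})$. Choosing $k=\lfloor \log_2(\min\{n,\rho\}/C)\rfloor$ for a suitable constant $C$ lets the construction simultaneously fit within $n$ vertices and within a radius ratio of $\rho$. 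Padding with additional disks of admissible radius and slightly enlarging all radii (exactly as in Lemma~\ref{lem:disk}) turns the contact points into proper intersections and brings the vertex count up to exactly~$n$ while preserving $\chi=2$.

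Given this realisation, the lower bound transfers because an online algorithm is not allowed to inspect the disk representation: it sees the same abstract graph and the same vertex ordering that Theorem~\ref{th:buffer} analyses. The hypothesis $b \leq \min\{n^{1-\epsilon},\rho^{1-\epsilon}\}$, together with $\min\{n,\rho\}\geq 2^{7/\epsilon}$, plays precisely the role of the hypotheses in Theorem~\ref{th:buffer} once $n$ is replaced by $\min\{n,\rho\}$ in the recursion-depth bookkeeping. Thus the productive depth $k'=\Theta(\epsilon\cdot k)=\Theta(\epsilon\cdot\min\{\log n,\log \rho\})$ is reached and $\mathcal{A}$ is forced to use $\Omega(\epsilon\cdot\min\{\log n,\log \rho\})$ colors. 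The main obstacle is the geometric bookkeeping of the base-$2$ embedding: I must verify that the sequence of strip widths and the slack used to separate sibling strips still give a valid, non-overlapping arrangement whose radius ratio can be tuned to exactly $\rho$. This is a direct analogue of the corresponding calculation in Lemma~\ref{lem:disk}, so I expect no genuinely new difficulty beyond adapting constants.
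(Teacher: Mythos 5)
Your proposal is correct and matches the route the paper intends (the paper states Corollary~\ref{cor:disk3} without an explicit proof, as the combination of the buffer adversary of Theorem~\ref{th:buffer} specialised to $d=2$ with the strip-based disk realisation of Lemma~\ref{lem:disk}/Theorem~\ref{th:disk}, capping the recursion depth by $\min\{\log n,\log\rho\}$ so that both the vertex budget and the radius ratio $\rho$ are respected). Your base-$2$ adaptation of the embedding and the replacement of $n$ by $\min\{n,\rho\}$ in the depth bookkeeping, using $b\leq\min\{n^{1-\epsilon},\rho^{1-\epsilon}\}$ and the fact that the algorithm never sees the (post hoc constructed) disk representation, is exactly the intended argument.
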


\section*{Acknowledgments}
We thank anonymous referees for their valuable comments.

\bibliography{online_col_chordal_graph_new}

\end{document}